\newtheorem{theorem}{Theorem}
\newenvironment{theorem*}[1][Theorem]{\begin{trivlist}\itshape
  \item[\hskip \labelsep {\bfseries #1}]}{\normalfont\end{trivlist}}
\newenvironment{remark*}[1][Remark]{\begin{trivlist}
  \item[\hskip \labelsep {\bfseries #1}]}{\hfill$\square$\end{trivlist}}
\newenvironment{example*}[1][Example]{\begin{trivlist}
  \item[\hskip \labelsep {\bfseries #1}]}{\hfill$\square$\end{trivlist}}
\DeclareMathSymbol{\eth} {\mathord}{AMSb}{"67}
\renewcommand{\phi}{\varphi}
\renewcommand{\rho}{\varrho}
\renewcommand{\theta}{\vartheta}
\renewcommand{\d}{\partial}
\newcommand{\vol}{\ensuremath{\mathrm{vol}}}
\newcommand{\ex}{\exists}
\newcommand{\fa}{\forall}
\newcommand{\lnorm}{\left\lVert}
\newcommand{\rnorm}{\right\lVert}
\newcommand{\lbetr}{\left\lvert}
\newcommand{\rbetr}{\right\lvert}
\newcommand{\norm}[1]{\lnorm {#1}\rnorm}
\newcommand{\abs}[1]{\lbetr {#1}\rbetr}
\renewcommand{\l}{\ensuremath{\left}}
\renewcommand{\r}{\ensuremath{\right}}
\newcommand{\sse}{\ensuremath{\subseteq}}
\newcommand{\tr}{\ensuremath{\opn{tr}}}
\newcommand{\opn}{\operatorname}
\let\stdparagraph\paragraph
\renewcommand\paragraph{\vspace*{1em}\stdparagraph}
\renewcommand{\cite}[1]{{[\cites{#1}}}
\newcommand{\Gf}{\ensuremath{\mathfrak{G}}}
\newcommand{\gf}{\ensuremath{\mathfrak{g}}}
\newcommand{\Ap}{\ensuremath{\mathcal{A}}}
\newcommand{\Dp}{\ensuremath{\mathcal{D}}}
\newcommand{\Np}{\ensuremath{\mathcal{N}}}
\newcommand{\Sp}{\ensuremath{\mathcal{S}}}
\newcommand{\Lp}{\ensuremath{\mathcal{L}}}
\newcommand{\rn}{\ensuremath{\mathbb{R}}}
\newcommand{\cn}{\ensuremath{\mathbb{C}}}
\newcommand{\nn}{\ensuremath{\mathbb{N}}}
\newcommand{\zn}{\ensuremath{\mathbb{Z}}}
\newcommand{\diag}{\ensuremath{\mathrm{diag}}}
\newcommand{\sgn}{\ensuremath{\mathrm{sgn}}}
\newcommand{\spt}{\ensuremath{\mathrm{spt}}}
\newcommand{\eref}[1]{\hyperref[#1]{Equation~\eqref{#1}}}
\begin{document}
\title[Zeta-regularized Lattice Field Theory]{Zeta-regularized Lattice Field Theory with Lorentzian background metrics}

\author{Tobias Hartung}
\address{Department of Mathematical Sciences, University of Bath, 4 West, Claverton Down, Bath, BA2 7AY, United Kingdom\\and\\Computation-based  Science  and  Technology  Research  Center, The  Cyprus  Institute,  20  Kavafi  Street,  2121  Nicosia, Cyprus}

\author{Karl Jansen}
\address{NIC, DESY Zeuthen, Platanenallee 6, 15738 Zeuthen, Germany}

\author{Chiara Sarti}
\address{Computer Laboratory, University of Cambridge, William Gates Building, 15 JJ Thomson Avenue, Cambridge, CB3 0FD, United Kingdom}


\maketitle

\begin{abstract}
  Lattice field theory is a very powerful tool to study Feynman's path integral non-perturbatively. However, it usually requires Euclidean background metrics to be well-defined. On the other hand, a recently developed regularization scheme based on Fourier integral operator $\zeta$-functions can treat Feynman's path integral non-pertubatively in Lorentzian background metrics. In this article, we formally $\zeta$-regularize lattice theories with Lorentzian backgrounds and identify conditions for the Fourier integral operator $\zeta$-function regularization to be applicable. Furthermore, we show that the classical limit of the $\zeta$-regularized theory is independent of the regularization. Finally, we consider the harmonic oscillator as an explicit example. We discuss multiple options for the regularization and analytically show that they all reproduce the correct ground state energy on the lattice and in the continuum limit. Additionally, we solve the harmonic oscillator on the lattice in Minkowski background numerically.  
\end{abstract}

\section{Introduction}

\subsection{Lattice Field Theory}

Feynman's path integral \cite{feynman,feynman-hibbs-styer} approach to 
quantum mechanics and quantum field theory has become 
an indispensable conceptual tool to evaluate physical observables, in 
particular in high energy physics. One very fascinating aspect 
of the path integral formulation is that it allows in principle
to compute also non-perturbative phenomena and thus reaches out 
to problems that can not be addressed in perturbation theory. 

The major obstacle to employ the path integral for studying 
non-perturbative effects is, however, that it is in general 
ill defined, if we take it at face value, i.e., using a Lorentzian  
background metric. An elegant way out is to analytically 
continue to imaginary -- Euclidean -- time and discretize the system 
either on a time lattice in quantum mechanics or on a space-time 
grid in quantum field theory. The so introduced lattice spacing 
acts then as an ultraviolet regulator, whereas 
a finite time or box length serves
as an infrared regulator resulting thus in completely regularized 
version of the path integral, see \cite{Gattringer:2010zz,Rothe:1992nt} 
for introductions to the 
so obtained {\em lattice field theories}. 

Another important advantage of the Euclidean lattice regulated path integral 
is its resemblance to systems in statistical physics where it 
corresponds to the partition function. This allows in particular  
to resort 
to numerical calculations using Markov Chain Monte Carlo (MCMC) methods to evaluate 
the path integral.
This approach has been employed very successfully in the last years 
to obtain many non-perturbative results mainly for quantum chromodynamics, the theory of the 
strong interaction between quarks and gluons. 
In particular, it became possible to perform an ab initio calculation of the 
low-lying baryon spectrum using only QCD as the underlying theory \cite{Durr:2008zz}. 
Furthermore, lattice QCD calculations led to a detailed insight into the 
structure of hadrons \cite{Constantinou:2015agp,Cichy:2018mum} and 
they have provided non-perturbative contributions to electroweak 
processes \cite{Meyer:2018til} and flavor physics \cite{Juettner:2016atf}, 
see e.g.~\cite{Aoki:2019cca,kronfeld2012twenty} for overviews.
They were also very successful to determine thermodynamic properties~\cite{Ding:2015ona}. 
The remarkable progress in lattice QCD simulations can be seen by the fact 
that nowadays 
calculations are performed on large lattices -- presently of the order of 
around $100^3\times 200$ lattice points -- and directly in physical conditions. 

Despite this impressive success, such Euclidean time lattice simulations can not
be employed in important and so far open questions such as the matter 
anti-matter asymmetry, topological theories, the discrepancy of the observed 
and theoretically predicted amount of charge-parity (CP) violation and 
non-equilibrium physics. 
The reason is that in these cases the integrand in the path integral 
becomes complex such that MCMC can not be used.

It would therefore be extremely valuable to have an alternative regularization
of the path integral which allows, on one hand, to address non-perturbative physics and,  
on the other hand, to use a Minkowski background metric or even more general Lorentzian 
metrics. It is precisely at this point where the proposed $\zeta$-regularization 
of the path integral \cite{hartung-jmp,hartung-iwota,hartung-jansen,jansen-hartung} is filling this so far missing gap.  
By ``gauging'' the path integral (cf. \autoref{sec:intro-zeta}),  
it allows for a fully non-perturbative regularization  
for very general metrics. 
In this way, physical observables can be computed
even in cases where 
the Euclidean lattice techniques such as MCMC simulations fail and even 
on quantum computers~\cite{hartung-jmp,hartung-iwota,hartung-jansen,jansen-hartung}.

In this article, we aim to extend the mathematical formulation of $\zeta$-regularization within the context of lattice theories and to illustrate its application at 
the example of the quantum mechanical harmonic oscillator. In particular, we will test different ways 
of gauging the path integral  and demonstrate that 
all the here employed choices lead to correct physical results. In addition, 
we will show that the $\zeta$-regularization gives the  
correct classical limit. 
Our work will therefore not only provide a rather simple example of the 
$\zeta$-regularization on the lattice with Lorentzian background metric, but it also discusses different classes of 
gauges providing thus suitable choices when applied to a particular physical model 
of interest.

\subsection{$\zeta$-regularization}\label{sec:intro-zeta}

Operator $\zeta$-functions are a means of constructing traces on certain operator algebras. Given an algebra $\Ap$ of operators, a trace $\tau$ that is defined on a sub-algebra $\Ap_0\sse\Ap$, and an operator $A\in\Ap\setminus\Ap_0$, the aim of operator $\zeta$-functions is to define $\tau(A)$. To this end, we construct a holomorphic family $\phi:\ \cn\to\Ap$ such that $\phi$ maps an open connected subset of $\cn$ into $\Ap_0$ and $\phi(0)=A$. Given this setup and sufficient regularity of the operator algebra and construction of $\phi$, we can study the operator $\zeta$-function $\zeta(\phi)$ which is defined via analytic continuation of $z\mapsto\tau(\phi(z))$ on the domain that is mapped to $\Ap_0$ under $\phi$. If this analytic extension exists in a neighborhood of $0$ and the value $\zeta(\phi)(0)$ is independent of the choices made in the construction of $\phi$, then we can use $\zeta(\phi)(0)$ as our definition of $\tau(\phi(0))$, i.e., $\tau(A)$.

For example, if we consider the absolute value of the differential operator $A:=\abs\d$ on $\rn/2\pi\zn$, then $\Ap$ is the algebra of classical pseudo-differential operators on $\rn/2\pi\zn$ and $\tau$ the trace on trace-class operators, i.e., the sum of eigenvalues counting multiplicities. Furthermore, we may construct $\phi(z):=A^{z+1}$. Formally evaluating the trace on $\phi(z)$ thus yields \mbox{$\tau(\phi(z))=\sum_{n\in\zn}\abs{n}^{z+1}=2\sum_{n\in\nn}n^{z+1}$.} This makes sense for $\Re(z)<-2$ and implies that the corresponding operator $\zeta$-function is $\zeta(\phi)(z)=2\zeta_R(-z-1)$ where $\zeta_R$ denotes the Riemann $\zeta$-function. In particular, the $\zeta$-regularized trace of $\abs\d$ is given by $\zeta(\phi)(0)=2\zeta_R(-1)=-\frac{1}{6}$.

Operator $\zeta$-functions were first studied for pseudo-differential operators by Ray and Singer~\cite{ray,ray-singer} whose work has been extended by many authors including Seeley~\cite{seeley}, Guillemin~\cite{guillemin-lagrangian,guillemin-residue-traces,guillemin-wave}, Kontsevich and Vishik~\cite{kontsevich-vishik,kontsevich-vishik-geometry}, Lesch~\cite{lesch}, Paycha and Scott~\cite{paycha,paycha-scott}, and Wodzicki~\cite{wodzicki}. This notion of $\zeta$-regularization has been introduced to physics by Hawking~\cite{hawking} in his work on path integrals with a curved space-time background, successfully applied in many physical settings (e.g., the Casimir effect, defining one-loop functional determinants, the stress-energy tensor, conformal field theory, and string theory~\cite{beneventano-santangelo,blau-visser-wipf,bordag-elizalde-kirsten,bytsenko-et-al,culumovic-et-al,dowker-critchley,elizalde2001,elizalde,elizalde-et-al,elizalde-vanzo-zerbini,fermi-pizzocchero,hawking,iso-murayama,marcolli-connes,mckeon-sherry,moretti97,moretti99,moretti00,moretti11,robles,shiekh,tong-strings}), and is related to Hadamard parametrix renormalization~\cite{hack-moretti}. As such the approach plays a foundational role for an effective Lagrangian to be defined~\cite{blau-visser-wipf}, for relatively easy computations of heat kernel coefficients~\cite{bordag-elizalde-kirsten}, and for non-trivial extensions of the Chowla-Selberg formula~\cite{elizalde2001}. Furthermore, the residues associated with operator $\zeta$-functions give rise to the multiplicative anomaly in perturbation theory~\cite{elizalde-vanzo-zerbini} and contribute to the energy momentum tensor of a black hole~\cite{hawking}.

Although $\zeta$-regularization based on pseudo-differential operators has been very successful, it cannot be applied in general. Radzikowski~\cite{radzikowski92,radzikowski96} showed that, in general, Fourier integral operator $\zeta$-functions~\cite{guillemin-lagrangian,guillemin-residue-traces,guillemin-wave,hartung-phd,hartung-scott} are necessary. The application of Fourier integral operator $\zeta$-functions to quantum field theory has recently been developed in the context of the Hamiltonian formulation of quantum field theories~\cite{hartung-phd,hartung-jmp,hartung-iwota,hartung-jansen,hartung-jansen-gauge-fields,hartung-scott,jansen-hartung}. The starting point remains fundamentally the same as with the pseudo-differential application of $\zeta$-regularization. Given an algebra of Fourier integral operators $\Ap$ corresponding to a quantum field theory, the time evolution operator $U$, and an observable $\Omega$, we would like to compute the vacuum expectation value
\begin{align*}\tag{$*$}\label{eq:feynman_definition_vacuum_expectation_value}
  \langle\Omega\rangle=\lim_{T\to\infty+i0}\frac{\tr(U(0,T)\Omega)}{\tr(U(0,T))}.
\end{align*}
Both $U(0,T)$ and $U(0,T)\Omega$ are Fourier integral operators in general, $U(0,T)$ is unitary, and $U(0,T)\Omega$ is usually unbounded. Hence, neither numerator nor denominator are well-defined. This justifies the Fourier integral operator $\zeta$-function approach to defining a $\zeta$-regularized vacuum expectation value which replaces the traces in \eref{eq:feynman_definition_vacuum_expectation_value} by operator $\zeta$-functions. To construct these $\zeta$-regularized vacuum expectation values~\cite{hartung-jmp,hartung-iwota,hartung-jansen,hartung-jansen-gauge-fields,jansen-hartung}, we construct a suitable holomorphic family of operators $\Gf$ called ``gauge''. This gauge has to satisfy a number of properties. Most importantly, for $\Re(z)\ll0$ both $U(0,T)\Gf(z)\Omega$ and $U(0,T)\Gf(z)$ have to be trace-class operators. This will imply that both $\zeta(U(0,T)\Gf\Omega)$ and $\zeta(U(0,T)\Gf)$ are well-defined meromorphic functions and therefore 
\begin{align*}
  \langle\Omega\rangle_\zeta(z)=\lim_{T\to\infty+i0}\frac{\zeta(U(0,T)\Gf\Omega)}{\zeta(U(0,T)\Gf)}(z)
\end{align*}
is a well-defined meromorphic function (provided that $\zeta(U(0,T)\Gf)$ is not the constant zero-function). In particular, if $\langle\Omega\rangle_\zeta$ is holomorphic in a neighborhood of $0$, then we can define the $\zeta$-regularized vacuum expectation value as $\langle\Omega\rangle_\zeta(0)$. This construction has been shown to be generally possible~\cite{hartung-jmp,hartung-iwota,hartung-jansen-gauge-fields}  and to be physically meaningful and accessible using quantum computing~\cite{hartung-jansen,jansen-hartung}.

Furthermore, the construction is formally applicable to lattice formulations in Lorentzian space-time as well. Applying a lattice discretization to $\zeta(U(0,T)\Gf\Omega)(z)$ for example yields integrals of the form $\int \omega(\Phi) \gf(z)(\Phi) e^{iS(\Phi)} \Dp(\Phi)$, i.e., precisely the type of integral we expect from a lattice discretization but with an additional term $\gf(z)(\Phi)$. This additional term $\gf(z)(\Phi)$ moreover satisfies the asymptotic bound $\abs{\gf(z)(\Phi)}\le C (1+\norm\Phi^2)^{\delta \Re(z)}$ for some constants $C,\delta>0$ and $\norm\Phi$ sufficiently large. Since $\omega$ is polynomially bounded for large $\norm\Phi$, this implies that $\Phi\mapsto\omega(\Phi) \gf(z)(\Phi) e^{iS(\Phi)}$ is integrable for $\Re(z)\ll0$ and the $\zeta$-regularized lattice vacuum expectation can be obtained analytically continuing to $z=0$.

If this general setup can be obtained from lattice discretizing $\zeta$-regularized vacuum expectation values starting from the Hamiltonian formulation, we may also take it as a starting point to an a priori $\zeta$-regularization of a lattice field theory without needing to construct the Hamiltonian formulation first. However, this raises a number of applicability questions. Most importantly:
\begin{enumerate}
\item[Q1:] What kind of gauge families $\gf$ can be used to $\zeta$-regularize lattice field theories in a Lorentzian background?
\item[Q2:] Under what conditions is it possible to construct a $\zeta$-regularized Hamiltonian theory starting from a ``na\"ively'' $\zeta$-regularized lattice field theory? 
\end{enumerate}
The first question therefore addresses the freedom we have in choosing gauge families most suitable for a given lattice simulation whereas answering the second question allows us to prove equivalence with the Fourier integral operator $\zeta$-function approach to $\zeta$-regularized vacuum expectation values. This is important because it means that the known applicability and physicality results~\cite{hartung-jmp,hartung-iwota,hartung-jansen,hartung-jansen-gauge-fields,jansen-hartung} extend to the thus constructed $\zeta$-regularized lattice field theory. In particular, it provides an avenue of using quantum computation to simulate lattice field theories in Lorentzian backgrounds.

\subsection{Aims of this article} Since $\zeta$-regularized lattice field theories in Lorentzian backgrounds have not been studied from this point of view, we aim to gain insights into questions Q1 and Q2. In \autoref{sec:zeta-reg-lattice} we will begin with a general discussion of lattice field theory from a distribution theory point of view. This is important for Q2 because the Hamiltonian theory uses microlocal analysis and therefore is fundamentally a theory of distributions. Formally introducing the gauge family $\gf$ in the second half of \autoref{sec:zeta-reg-lattice} we will show that the $\zeta$-regularized lattice field theory and the original lattice field theory coincide in the distributional limit $z\to0$. This is a necessary first step in the construction of gauged transfer matrices and gauged time evolution operators in \autoref{sec:gauging-transfer-matrix}. Thus, \autoref{sec:zeta-reg-lattice} addresses Q1 and \autoref{sec:gauging-transfer-matrix} addresses Q2. In \autoref{sec:classical-limit} we will show that $\zeta$-regularized lattice field theories have correct classical limits. In particular, we will explicitly compute the classical limit for the harmonic oscillator.  We will then continue discussing the harmonic oscillator analytically in \autoref{sec:harmonic_oscillator_analytic} and numerically in \autoref{sec:harmonic_oscillator_numeric}. For the analytic discussion, we will solve the harmonic oscillator five times, once with Euclidean background and then again with four different gauge families $\gf$. Each of these gauges will be representative of a generic property for the gauge that may be useful in different circumstances. The numerical discussion of the harmonic oscillator will focus on classical computing rather than quantum computing and highlight extrapolation and convergence behavior for one of the choices of gauge. 

\section{$\zeta$-regularizing lattice vacuum expectation values}\label{sec:zeta-reg-lattice}
In this section, we will investigate the connection between lattice vacuum expectation values with Euclidean and Lorentzian backgrounds. In particular, we will highlight how additional complications arise in Lorentzian backgrounds and how a distributional understanding of the lattice integrals and the $\zeta$-regularization are used to formally understand the lattice integrals in Lorentzian backgrounds. To this end, approximation on compact subsets will be instrumental. In \autoref{thm:compactification_limit} and \autoref{thm:analytic continuation}, we will provide conditions that ensure the infinite volume limit exists and coincides with the $\zeta$-regularized vacuum expectation value considered in~\cite{hartung-iwota,hartung-jmp,hartung-jansen,hartung-jansen-gauge-fields,jansen-hartung}. 

In a Euclidean lattice theory with $T$ time slices, we can express the vacuum expectation value~$\langle \Omega\rangle_E$ of an observable $\Omega$ as
\begin{align*}
  \langle \Omega\rangle_E=\frac{\int_{X^T}\Omega(\Phi)e^{-S(\Phi)}d\mu^T(\Phi)}{\int_{X^T}e^{-S(\Phi)}d\mu^T(\Phi)}
\end{align*}
where $X$ is a locally compact Hausdorff space, $\mu$ a $\sigma$-finite Radon measure on~$X$, and $\mu^T$ denotes the $T$-fold product measure $\mu\times\mu\times\cdots\times\mu$ on $X^T$.

\begin{example*}
  \begin{enumerate}
  \item[(i)] For the topological rotor (i.e., a massive particle on a circle of radius $R$), $X$ is the circle of radius $R$.
  \item[(ii)] For quantum mechanics in $\rn^N$, $X=\rn^N$.
  \item[(iii)] For a quantum field with values in $V$ on $L$ spacial lattice points, $X=V^L$. In particular, for a gauge field on $L$ spacial lattice points, $X$ often is $G^L$ for some (locally) compact Hausdorff group and $\mu$ the Haar measure on $X$.
  \end{enumerate}
\end{example*}

Vacuum expectation values of a lattice theory with Lorentzian background can be expressed as
\begin{align*}
  \langle \Omega\rangle_L=\frac{\int_{X^T}\Omega(\Phi)e^{iS(\Phi)}d\mu^T(\Phi)}{\int_{X^T}e^{iS(\Phi)}d\mu^T(\Phi)}
\end{align*}
where the integrals $\int_{X^T}F(\Phi)e^{iS(\Phi)}d\mu^T(\Phi)$ with $F\in\{1,\Omega\}$ are defined distributionally if $\mu(X)=\infty$. In terms of the transfer matrix, we can write 
\begin{align*}
  Z=\int_{X^T}e^{iS(\Phi)}d\mu^T(\Phi)=\int_{X^T}\prod_{j=1}^T e^{iS_\Delta(\Phi_{j+1},\Phi_j)}d\mu^T(\Phi)
\end{align*}
where the kernel $\kappa_\Delta$ of the transfer matrix\footnote{It should be noted that in complete generality, the transfer matrix may be ``time-dependent.'' If that is the case, then we have different $S_{\Delta,j}(\Phi_{j+1},\Phi_j)$, $\kappa_{\Delta,j}$, and $U_{\Delta,j}$. For the purpose of legibility, we will assume $\fa j,k:\ S_{\Delta,j}=S_{\Delta,k}=S_{\Delta}$ throughout this article. This assumption is not necessary for any of the results reported here to be true except that various powers, such as $U_\Delta^T$, would need to be replaced with the corresponding products, such as $\prod_{j=1}^TU_{\Delta,j}$. } $U_\Delta$ is given by $\kappa_\Delta(\Phi',\Phi)=e^{iS_\Delta(\Phi',\Phi)}$. In this distributional setting, $U_\Delta$ is therefore to be understood as the operator satisfying,
\begin{align*}\tag{$**$}\label{eq:definition U_Delta}
  \fa \phi,\psi\in C_c^\infty(X):\ \langle\phi,U_\Delta\psi\rangle=\int_X\int_X\phi(\Phi')^*e^{iS_\Delta(\Phi',\Phi)}\psi(\Phi)d\mu(\Phi')d\mu(\Phi).
\end{align*}

At this point we can make the connection to one of the necessary assumptions in the construction of $\zeta$-regularized vacuum expectation values in the Hamiltonian setting~\cite{hartung-iwota,hartung-jmp,hartung-jansen,hartung-jansen-gauge-fields,jansen-hartung}. For the approach to be applicable, we need to ensure that Cauchy surfaces (i.e., time slices) of the spacetime are compact. \eref{eq:definition U_Delta}, however,  implies that this assumption in the $\zeta$-formalism does not restrict generality. Let $K\sse X$ be compact. Then we define $U_\Delta|_K$ to be the restriction of $U_\Delta$ to $K$ which means that the kernel of $U_\Delta|_K$ is given by $1_{K\times K}\kappa_\Delta$ as a distribution on $K^2$ and $U_\Delta|_K$ is uniquely determined by
\begin{align*}
  \fa \phi,\psi\in C^\infty(K):\ \langle\phi,U_\Delta|_K\psi\rangle=\int_K\int_K\phi(\Phi')^*e^{iS_\Delta(\Phi',\Phi)}\psi(\Phi)d\mu|_K(\Phi')d\mu|_K(\Phi).
\end{align*}
The $\zeta$-formalism by construction only works for $U_\Delta|_K$. However, since for every $\phi,\psi\in C_c^\infty(X)$, there exists a compact $K_{\phi,\psi}\sse X$ such that $\spt\phi\cup\spt\psi\sse K_{\phi,\psi}$, where $\spt\phi$ denotes the support of $\phi$, we obtain that $\langle\phi,U_\Delta\psi\rangle=\langle\phi,U_\Delta|_{K_{\phi,\psi}}\psi\rangle$. Hence, $U_\Delta|_K$ converges to $U_\Delta$ for $K\nearrow X$ in the sense that the kernel $1_{K\times K}\kappa_\Delta$ of $U_\Delta|_K$ converges to the kernel $\kappa_\Delta$ of $U_\Delta$ in the space of distributions.

In the Euclidean setting, we generally don't have to discuss this aspect because the integrand $Fe^{-S}$ is in $L_1(X^2)$, and $\mu$ is a $\sigma$-finite Radon measure. This implies 
\begin{align*}
  &\int_{X^T}Fe^{-S}d\mu^T-\int_{K^T}Fe^{-S}d\mu^T=&\int_{X^T}1_{X^T\setminus K^T}Fe^{-S}d\mu^T\to&0\quad (K\nearrow X)
\end{align*}
and since $\int_{X^T}Fe^{-S}d\mu^T$ is a well-defined integral, the limit $K\nearrow X$ usually needs not to be considered.

In the $\zeta$-regularized setting, we enforce a similar situation by enforcing asymptotic bounds of the form $\abs{\Omega(\Phi)}\le c\norm\Phi^d$ for $\norm\Phi\to\infty$ and some constants $c,d\in\rn$ on the observable $\Omega$, and by introducing a suitably chosen holomorphic family of functions $\gf$ with $\gf(0)=1$ and $\abs{\gf(z)(\Phi)}\le \alpha\norm{\Phi}^{\beta\Re(z)}$ for constants $\alpha,\beta>0$ and $\norm{\Phi}\gg1$. This family $\gf$ is called a ``gauge'' in the mathematical literature and ensures that the gauged integrand $Fe^{iS}\gf(z)$ is in $L_1(X^T)$ for $\Re(z)\ll0$. We therefore obtain
\begin{align*}
  &\int_{X^T}Fe^{iS}\gf(z)d\mu^T-\int_{K^T}Fe^{iS}\gf(z)d\mu^T=&\int_{X^T}1_{X^T\setminus K^T}Fe^{iS}\gf(z)d\mu^T\to&0
\end{align*}
whenever $\Re(z)\ll0$. Thus, considering a sequence $(K_n)_{n\in\nn}$ of compacta in $X$ with $K_n\nearrow X\ (n\to\infty)$, we conclude
\begin{align*}
  \lim_{n\to\infty}\frac{\int_{K_n^T}\Omega e^{iS}\gf(z)d\mu^T}{\int_{K_n^T}e^{iS}\gf(z)d\mu^T}=\frac{\int_{X^T}\Omega e^{iS}\gf(z)d\mu^T}{\int_{X^T}e^{iS}\gf(z)d\mu^T}
\end{align*}
whenever $\Re(z)\ll0$. Since the $\zeta$-regularized vacuum expectation $\langle\Omega\rangle_\zeta$ is defined via analytic continuation of $z\mapsto\frac{\langle \Omega\gf(z)\rangle_L}{\langle\gf(z)\rangle_L}$, the above identity implies pointwise convergence on a half-space with $\Re(z)\ll0$.

\begin{theorem}\label{thm:compactification_limit}
  Let $(K_n)_{n\in\nn}$ be a sequence of compacta in $X$ with $K_n\nearrow X$ and $\l(\frac{\langle 1_{K_n^T}\Omega\gf\rangle_L}{\langle 1_{K_n^T}\gf\rangle_L}\r)_{n\in\nn}$ locally bounded in $C(D)$ where $D\sse\cn$ is open and connected containing a half-space with $\Re(z)\ll0$. Then, $\l(\frac{\langle 1_{K_n^T}\Omega\gf\rangle_L}{\langle 1_{K_n^T}\gf\rangle_L}\r)_{n\in\nn}$ converges compactly to the analytic continuation of $\frac{\langle\Omega\gf\rangle_L}{\langle\gf\rangle_L}$ on~$D$. In particular, if $0\in D$, we obtain
  \begin{align*}
    \lim_{n\to\infty}\langle 1_{K_n^T}\Omega\rangle_\zeta=\frac{\langle\Omega\gf\rangle_L}{\langle\gf\rangle_L}(0)=\langle \Omega\rangle_\zeta.
  \end{align*}
\end{theorem}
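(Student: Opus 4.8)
\emph{Plan.} The plan is to recognize this as an instance of the Vitali--Porter theorem: the sequence is already known to converge pointwise on a region $\Re(z)\ll0$, and local boundedness of a family of holomorphic functions on a connected domain upgrades such pointwise convergence to compact convergence on the whole domain. Write $f_n(z):=\frac{\langle 1_{K_n^T}\Omega\gf\rangle_L}{\langle 1_{K_n^T}\gf\rangle_L}(z)$ and $f(z):=\frac{\langle\Omega\gf\rangle_L}{\langle\gf\rangle_L}(z)$. The common normalization $\int_{X^T}e^{iS}d\mu^T$ cancels in each ratio, so $f_n(z)=\frac{\int_{K_n^T}\Omega\gf(z)e^{iS}d\mu^T}{\int_{K_n^T}\gf(z)e^{iS}d\mu^T}$, and $f$ is the analytic continuation of the corresponding infinite-volume ratio.

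First I would establish that each $f_n$ is holomorphic on $D$. Because $K_n$ is compact and $\mu$ is a Radon measure, $\mu^T|_{K_n^T}$ is a finite measure; since $\Omega$ and $e^{iS}$ do not depend on $z$ and $\gf$ is a holomorphic family, Morera's theorem together with Fubini's theorem (or differentiation under the integral sign) shows that both $z\mapsto\int_{K_n^T}\Omega\gf(z)e^{iS}d\mu^T$ and $z\mapsto\int_{K_n^T}\gf(z)e^{iS}d\mu^T$ are entire. Hence $f_n$ is meromorphic on $\cn$, with poles only at the zeros of its denominator. The hypothesis that $(f_n)_{n\in\nn}$ is locally bounded in $C(D)$ forces each $f_n$ to be continuous, hence pole-free, on $D$, so $f_n$ is holomorphic on $D$.

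Next comes the crux: I would apply the Vitali--Porter theorem on the connected open set $D$. Its hypotheses are now in hand: $(f_n)_{n\in\nn}$ is holomorphic on $D$ and locally bounded there (by assumption), and it converges pointwise on $\{\Re(z)\ll0\}\cap D$, which is a nonempty open --- hence accumulating --- subset of $D$; this pointwise convergence, with limit $f$ there, is exactly what was shown immediately before the statement. Vitali--Porter then yields that $(f_n)_{n\in\nn}$ converges compactly on all of $D$ to a holomorphic limit. (Equivalently, Montel's theorem makes $(f_n)_{n\in\nn}$ a normal family, and any two locally uniform subsequential limits agree with $f$ on the half-space, hence on all of $D$ by the identity theorem, pinning the limit down.) Since the compact limit coincides with $f$ on the half-space, the identity theorem identifies it with the analytic continuation of $\frac{\langle\Omega\gf\rangle_L}{\langle\gf\rangle_L}$ to $D$, which is the first assertion.

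Finally, for the ``in particular'' clause, if $0\in D$ then compact convergence gives $f_n(0)\to f(0)$. Using $\gf(0)=1$ we have $f_n(0)=\frac{\int_{K_n^T}\Omega e^{iS}d\mu^T}{\int_{K_n^T}e^{iS}d\mu^T}=\langle 1_{K_n^T}\Omega\rangle_\zeta$, while $f(0)=\frac{\langle\Omega\gf\rangle_L}{\langle\gf\rangle_L}(0)=\langle\Omega\rangle_\zeta$ by definition, giving the claimed limit. The main obstacle is not any single computation but ensuring the hypotheses of Vitali--Porter are genuinely met: holomorphy of $f_n$ relies on the compactness of $K_n$ and finiteness of $\mu^T|_{K_n^T}$ to differentiate under the integral, and the passage from pointwise convergence on the thin region $\Re(z)\ll0$ to compact convergence on all of $D$ rests essentially on the local-boundedness hypothesis, which is precisely what controls the zeros of the denominators and prevents poles from entering $D$.
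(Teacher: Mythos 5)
Your proposal is correct and follows the same route as the paper, whose entire proof is the one-line observation that the claim ``follows directly from Vitali's Theorem since the set of pointwise convergence contains a half-space with $\Re(z)\ll0$ and we have asserted local boundedness.'' The extra details you supply --- holomorphy of each finite-volume ratio via Morera/Fubini on the finite measure $\mu^T|_{K_n^T}$, absence of poles in $D$ from the $C(D)$ hypothesis, and identification of the limit by the identity theorem --- are exactly the routine verifications the paper leaves implicit.
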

\autoref{thm:compactification_limit} follows directly from Vitali's Theorem since the set of pointwise convergence contains a half-space with $\Re(z)\ll0$ and we have asserted local boundedness.

\begin{remark*}
  The local boundedness assumption in \autoref{thm:compactification_limit} means that for every $z_0$ in the domain $D$, there exists a neighborhood $U\subseteq D$ of $z_0$ such that the sequence of functions $\frac{\langle 1_{K_n^T}\Omega\gf(z)\rangle_L}{\langle 1_{K_n^T}\gf(z)\rangle_L}$ with $z\in U$ is uniformly bounded as $K_n\nearrow X$, that is, 
  \begin{align*}
    \sup_{n\in\nn}\ \sup_{z\in U}\ \abs{\frac{\langle 1_{K_n^T}\Omega\gf(z)\rangle_L}{\langle 1_{K_n^T}\gf(z)\rangle_L}}<\infty.
  \end{align*}
  
  From a physical point of view, this means that any finite volume effects arising from the restriction to $K_n$ remain bounded under small variation of the gauge parameter~$z$. In this sense, the local boundedness assumption requires the physical theory to have locally bounded finite volume effects if the volume is sent to infinity and the observable is subject to holomorphic deformation. In particular, \autoref{thm:compactification_limit} then implies that locally bounded finite volume effects already imply vanishing finite volume effects in the infinite volume limit.
\end{remark*}

Since the $\zeta$-regularization requires compactification of $X$ and \autoref{thm:compactification_limit} ensures that we can reconstruct the non-compact case from compact restrictions, we will from now on only consider the case in which $X$ is compact. 
\begin{example*}
  For quantum mechanics in $\rn^N$, we can generally consider $X$ to be the flat torus $(\rn/L\zn)^N$ of side length $L\in\rn_{>0}$. In this case the limit $K\nearrow \rn^N$ corresponds to $L\nearrow\infty$.
\end{example*}

Returning to the transfer matrix $U_\Delta$ on compact $X$, we observe that the kernel 
\begin{align*}
  \kappa_{U_\Delta^T}(\Phi_{T+1},\Phi_1)=\int_{X^{T-1}}\prod_{j=1}^Te^{iS_\Delta(\Phi_{j+1},\Phi_j)}d\mu^{T-1}(\Phi_2,\ldots,\Phi_T)
\end{align*}
of $U_\Delta^T$ is integrable along the diagonal which implies that the nuclear trace 
\begin{align*}
  \tr(U_\Delta^T)=\langle\kappa_{U_\Delta^T},\delta_{\diag}\rangle=\int_X\kappa_{U_\Delta^T}(\Phi,\Phi)d\mu(\Phi)
\end{align*}
of $U_\Delta^T$ is well-defined. Introducing an observable $\Omega\in L_\infty(X)$ and gauge $\gf$, we thus obtain (using $F\in\{1,\Omega\}$) that
\begin{align*}
  \langle\kappa_{U_\Delta^T}F\gf(z),\delta_\diag\rangle=&\int_X\kappa_{U_\Delta^T}(\Phi,\Phi)F(\Phi)\gf(z)(\Phi)d\mu(\Phi)
\end{align*}
is a holomorphic function in $z$ which directly implies \autoref{thm:analytic continuation}.

\begin{theorem}\label{thm:analytic continuation}
  If $X$ is compact, then
  \begin{align*}
    \langle \Omega\rangle_L=\frac{\int_{X^T}\Omega(\Phi)e^{iS(\Phi)}d\mu^T(\Phi)}{\int_{X^T}e^{iS(\Phi)}d\mu^T(\Phi)}=\langle\Omega\rangle_\zeta.
  \end{align*}
  Furthermore, if $X$ is not compact, the assumptions of \autoref{thm:compactification_limit} are satisfied, and $(K_n)_{n\in\nn}$ is as in \autoref{thm:compactification_limit}, then
  \begin{align*}
    \lim_{n\to\infty}\frac{\int_{K_n^T}\Omega(\Phi)e^{iS(\Phi)}d\mu^T(\Phi)}{\int_{K_n^T}e^{iS(\Phi)}d\mu^T(\Phi)}=\langle \Omega\rangle_\zeta.
  \end{align*}
\end{theorem}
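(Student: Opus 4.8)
The plan is to reduce everything to the single observation -- already essentially recorded in the paragraph preceding the statement -- that on a compact $X$ the numerator and denominator of the gauged vacuum expectation value are \emph{entire} in $z$, so that no analytic continuation is needed and the value at $z=0$ can be read off directly.

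\textbf{Compact case.} First I would note that compactness of $X$ forces $\mu$ to be finite, so that for every $z\in\cn$ the integrals
\begin{align*}
  N(z)=\int_{X^T}\Omega\,\gf(z)\,e^{iS}\,d\mu^T\qquad\text{and}\qquad Z(z)=\int_{X^T}\gf(z)\,e^{iS}\,d\mu^T
\end{align*}
converge absolutely: $\abs{e^{iS}}\equiv1$, $\Omega\in L_\infty(X)$, and $\gf(z)$ is bounded on the compactum $X^T$. In the transfer-matrix language these are precisely the nuclear traces $\tr(U_\Delta^T\Omega\gf(z))=\langle\kappa_{U_\Delta^T}\Omega\gf(z),\delta_\diag\rangle$ and $\tr(U_\Delta^T\gf(z))=\langle\kappa_{U_\Delta^T}\gf(z),\delta_\diag\rangle$, finite because $\kappa_{U_\Delta^T}$ is integrable along the diagonal. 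Since $z\mapsto\gf(z)(\Phi)$ is holomorphic for each $\Phi$ and, on $X^T$, the integrand is dominated uniformly for $z$ in compact sets, a standard holomorphy-of-parameter-integrals (Morera) argument shows $N$ and $Z$ are entire.

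The second step is to evaluate at $z=0$. Because $\gf(0)=1$, we have $N(0)=\int_{X^T}\Omega e^{iS}d\mu^T$ and $Z(0)=\int_{X^T}e^{iS}d\mu^T$, the latter being nonzero by the standing assumption that renders $\langle\Omega\rangle_L$ well-defined. Recalling that $\frac{\langle\Omega\gf(z)\rangle_L}{\langle\gf(z)\rangle_L}=N(z)/Z(z)$ (the common normalisation $\int_{X^T}e^{iS}d\mu^T$ cancels), this ratio is holomorphic near $0$; hence its analytic continuation to $z=0$, which \emph{defines} $\langle\Omega\rangle_\zeta$, is just its value $N(0)/Z(0)=\langle\Omega\rangle_L$. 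I expect the one genuinely delicate point to be justifying holomorphy of $N$ and $Z$, i.e.\ the local-in-$z$ domination hypothesis; on compact $X$ this reduces to local boundedness of $z\mapsto\sup_{X^T}\abs{\gf(z)}$, which is exactly what a holomorphic family of functions provides.

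\textbf{Non-compact case.} For the second assertion I would apply the compact case verbatim with $X$ replaced by each $K_n$: since $\mu|_{K_n}$ is a finite Radon measure, $\Omega|_{K_n}\in L_\infty(K_n)$, and $\gf$ restricts to a gauge on $K_n$, the compact case gives $\langle 1_{K_n^T}\Omega\rangle_\zeta=\frac{\int_{K_n^T}\Omega e^{iS}d\mu^T}{\int_{K_n^T}e^{iS}d\mu^T}$ for every $n$. Substituting this into \autoref{thm:compactification_limit}, whose hypotheses are assumed, yields $\lim_{n\to\infty}\frac{\int_{K_n^T}\Omega e^{iS}d\mu^T}{\int_{K_n^T}e^{iS}d\mu^T}=\lim_{n\to\infty}\langle 1_{K_n^T}\Omega\rangle_\zeta=\langle\Omega\rangle_\zeta$. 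The obstacle here is purely conceptual -- recognising that ``$\zeta$-regularising on $K_n$'' is literally the compact case applied to $K_n$ -- after which \autoref{thm:compactification_limit} supplies the limit for free.
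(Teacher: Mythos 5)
Your proposal is correct and follows essentially the same route as the paper: the paper's entire argument is the observation, recorded in the paragraph preceding the statement, that on compact $X$ the pairing $\langle\kappa_{U_\Delta^T}F\gf(z),\delta_\diag\rangle=\int_X\kappa_{U_\Delta^T}(\Phi,\Phi)F(\Phi)\gf(z)(\Phi)\,d\mu(\Phi)$ is a well-defined holomorphic function of $z$ (so that, with $\gf(0)=1$, the value at $z=0$ is the unregularized expression), with the non-compact case handed off to \autoref{thm:compactification_limit} exactly as you do. Your rendering merely phrases this with the $X^T$-integrals instead of the diagonal trace and fills in the routine details the paper compresses (the Morera/domination argument and the standing assumption $Z(0)\ne0$), which is a harmless elaboration rather than a different proof.
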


\section{Gauging the transfer matrix and time evolution operator}\label{sec:gauging-transfer-matrix}
In \autoref{sec:zeta-reg-lattice}, we have discussed how the $\zeta$-regularization can make sense of the formal integrals arising in a Lorentzian lattice theory through a process of gauging. However, the gauging process as discussed in \autoref{sec:zeta-reg-lattice} can be understood as an operation on the observable.  More precisely, as was shown in~\cite{hartung-jansen,jansen-hartung}, the entire construction of $\zeta$-regularized vacuum expectation values is equivalent to the construction of an analytic continuation of a quotient $\frac{\langle O_1(z)\rangle}{\langle O_2(z)\rangle}$ of expectation values $\langle O_j(z)\rangle$ for suitably constructed observables $O_j(z)$. In order to make the connection to the general $\zeta$-regularization theory of vacuum expectation values~\cite{hartung-jmp,hartung-iwota,hartung-jansen-gauge-fields}, on the other hand, we need to consider the gauge as part of the time evolution operator. Furthermore, for the $\zeta$-regularization point of view to be as versatile as methods used in Euclidean lattice theories, it would be advantageous to find gauges that can be attributed to the transfer matrix as opposed to the entire time evolution operator only. In this section, we will therefore discuss the relation between the point of view in \autoref{sec:zeta-reg-lattice} and the formalism of~\cite{hartung-iwota,hartung-jmp,hartung-jansen,hartung-jansen-gauge-fields,jansen-hartung}. As such, we will be able to conclude that all properties proven in~\cite{hartung-iwota,hartung-jmp,hartung-jansen,hartung-jansen-gauge-fields,jansen-hartung} also hold for the $\zeta$-regularization of lattice field theories. In particular, we will obtain that the $\zeta$-regularized lattice vacuum expectation values are independent of the choice of gauge $\gf$.

The transfer matrix $U_\Delta$ is given by its kernel $\kappa_\Delta(\Phi',\Phi)=e^{iS_\Delta(\Phi',\Phi)}$, i.e.,
\begin{align*}
  \fa \phi,\psi\in C_c^\infty(X):\ \langle\phi,U_\Delta\psi\rangle=\int_X\int_X\phi(\Phi')^*e^{iS_\Delta(\Phi',\Phi)}\psi(\Phi)d\mu(\Phi')d\mu(\Phi),
\end{align*}
and its adjoint $U_\Delta^*$ has kernel $\kappa_{U_\Delta^*}(\Phi',\Phi)=\kappa_\Delta(\Phi,\Phi')^*=e^{-iS_\Delta(\Phi,\Phi')}$. Since the transfer matrix connects neighboring values $\Phi_j$ and $\Phi_{j+1}$ in terms of the global action $e^{iS(\Phi)}=\prod_{j=1}^Te^{iS_\Delta(\Phi_{j+1},\Phi_j)}$, it can often be advantageous to mirror this structure in the choice of gauge $\gf$.  Introducing a gauge $\gf(z)(\Phi)=\prod_{j=1}^T\gf_\Delta(z)(\Phi_{j+1},\Phi_j)$, we obtain the gauged transfer matrix $\tilde U_\Delta$ defined as
\begin{align*}
  \l(\tilde U_\Delta(z)\psi\r)(\Phi')=\int_Xe^{iS_\Delta(\Phi',\Phi)}\gf_\Delta(z)(\Phi',\Phi)\psi(\Phi)d\mu(\Phi).
\end{align*}
This directly yields the gauge family $\Gf_\Delta$ in the operator point of view via 
\begin{align*}
  \l(\Gf_\Delta(z)\psi\r)(\Phi'')=&\l(U_\Delta^*\tilde U_\Delta(z)\psi\r)(\Phi'')\\
  =&\int_X\int_Xe^{-iS_\Delta(\Phi',\Phi'')}e^{iS_\Delta(\Phi',\Phi)}\gf_\Delta(z)(\Phi',\Phi)\psi(\Phi)d\mu(\Phi)d\mu(\Phi')
\end{align*}
and thus $\tilde U_\Delta=U_\Delta\Gf_\Delta$. In other words, $\Gf_\Delta(z)$ is an integral operator with kernel
\begin{align*}
  \kappa_{\Gf_\Delta(z)}(\Phi',\Phi)=\int_Xe^{-iS_\Delta(y,\Phi')}e^{iS_\Delta(y,\Phi)}\gf_\Delta(z)(y,\Phi)d\mu(y).
\end{align*}

\begin{theorem}\label{thm-gauged-transfer-matrix}
  Let $U:=U_\Delta^T$ be the time evolution operator defined by a transfer matrix $U_\Delta$ corresponding to an action $S_\Delta$, $\gf(z)(\Phi)=\prod_{j=1}^T\gf_\Delta(z)(\Phi_{j+1},\Phi_j)$, $\Gf_\Delta(z)$ the gauge family with the kernel
  \begin{align*}
    \kappa_{\Gf_\Delta(z)}(\Phi',\Phi)=\int_Xe^{-iS_\Delta(y,\Phi')}e^{iS_\Delta(y,\Phi)}\gf_\Delta(z)(y,\Phi)d\mu(y),
  \end{align*}
  and $\omega$ an observable with corresponding operator $\Omega$. Then, 
  \begin{align*}
    \langle\omega\rangle(z)=\frac{\int_{X^T}\omega(\Phi)e^{iS(\Phi)}\gf(z)(\Phi)d\mu^T(\Phi)}{\int_{X^T}e^{iS(\Phi)}\gf(z)(\Phi)d\mu^T(\Phi)}
  \end{align*}
  is the $\zeta$-regularized lattice vacuum expectation value which, in the operator formulation, corresponds to the $\zeta$-regularized vacuum expectation value 
  \begin{align*}
    \langle\Omega\rangle(z)=\frac{\zeta\l((U_\Delta\Gf_\Delta)^T\Omega\r)(z)}{\zeta\l((U_\Delta\Gf_\Delta)^T\r)(z)}.
  \end{align*}
  Since $(U_\Delta\Gf_\Delta)^T$ is a gauge of $U$ (more precisely, $(U_\Delta\Gf_\Delta)^T=U\Gf$ using the ``global'' gauge family $\Gf=U^*(U_\Delta\Gf_\Delta)^T$), we obtain gauge independence of $\langle\omega\rangle(0)$ in the sense of~\cite{hartung-iwota,hartung-jmp,hartung-jansen,hartung-jansen-gauge-fields,jansen-hartung}. In particular, if $U_\Delta$ and $\Gf_\Delta$ commute, then $\Gf=(\Gf_\Delta)^T$.
\end{theorem}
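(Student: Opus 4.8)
The plan is to prove the four assertions in turn, with the structural heart being an explicit kernel computation identifying the lattice integral with an operator trace; the remaining claims then follow from unitarity of the transfer matrix and the gauge-independence results of the cited works. First I would establish the lattice--operator dictionary. Since $\tilde U_\Delta(z)=U_\Delta\Gf_\Delta(z)$ has kernel $e^{iS_\Delta(\Phi',\Phi)}\gf_\Delta(z)(\Phi',\Phi)$, composing $T$ copies and restricting to the diagonal exactly as in the computation of $\kappa_{U_\Delta^T}$ preceding \autoref{thm:analytic continuation} gives
\[
  \tr\bigl((U_\Delta\Gf_\Delta(z))^T\bigr)=\int_{X^T}\prod_{j=1}^Te^{iS_\Delta(\Phi_{j+1},\Phi_j)}\gf_\Delta(z)(\Phi_{j+1},\Phi_j)\,d\mu^T(\Phi)
\]
with the cyclic identification $\Phi_{T+1}=\Phi_1$. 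Using $e^{iS(\Phi)}=\prod_j e^{iS_\Delta(\Phi_{j+1},\Phi_j)}$ and $\gf(z)(\Phi)=\prod_j\gf_\Delta(z)(\Phi_{j+1},\Phi_j)$, the right-hand side is precisely the denominator of $\langle\omega\rangle(z)$, and inserting the multiplication operator $\Omega$ at a single time slice reproduces the numerator. Hence the two quotients coincide as functions of $z$, and therefore so do their analytic continuations, i.e.\ the $\zeta$-functions agree; this yields the first identification.

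Second I would verify that $\Gf:=U^*(U_\Delta\Gf_\Delta)^T$ is a gauge of $U=U_\Delta^T$ in the sense of \cite{hartung-iwota,hartung-jmp,hartung-jansen-gauge-fields}. The decisive input is unitarity of the transfer matrix, $UU^*=\id$, which gives $U\Gf=UU^*(U_\Delta\Gf_\Delta)^T=(U_\Delta\Gf_\Delta)^T$, so the gauged time-evolution operator is literally $\tilde U_\Delta^T$. The gauge axioms then reduce to three checks: holomorphy of $z\mapsto\Gf(z)$, inherited from that of $\gf_\Delta$; the normalization $\Gf(0)=\id$, which holds because $\gf_\Delta(0)=1$ forces $\tilde U_\Delta(0)=U_\Delta$ and hence $\Gf(0)=U^*U=\id$; and trace-class-ness of $\tilde U_\Delta^T$ and $\tilde U_\Delta^T\Omega$, which on the compact $X$ to which we have reduced follows from integrability of the diagonal kernel (as already noted for $U_\Delta^T$, since $\gf_\Delta(z)$ is bounded there) together with boundedness of $\Omega$, and in the non-compact reductions from the asymptotic bound $\abs{\gf(z)(\Phi)}\le\alpha\norm\Phi^{\beta\Re(z)}$ for $\Re(z)\ll0$. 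With these verified, $(U_\Delta\Gf_\Delta)^T=U\Gf$ is a bona fide gauge of $U$, the general theory applies verbatim, and gauge-independence of $\langle\omega\rangle(0)$ follows by citation.

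For the commuting special case, if $U_\Delta\Gf_\Delta=\Gf_\Delta U_\Delta$ then all factors may be reordered so that $(U_\Delta\Gf_\Delta)^T=U_\Delta^T\Gf_\Delta^T=U\Gf_\Delta^T$, whence $\Gf=U^*U\Gf_\Delta^T=\Gf_\Delta^T$, again by unitarity. I expect the main obstacle to be the trace-class step of the second stage rather than the algebraic identities, which are routine once $UU^*=\id$ is granted: making integrability precise requires controlling the diagonal of $\tilde U_\Delta^T$, which rests on the boundedness of $\gf_\Delta$ over the compact domain (or the decay bound in the non-compact case) and on the composition and unitarity properties of $U_\Delta$ that render the intermediate $(T-1)$-fold integrations well-defined.
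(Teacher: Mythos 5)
Your proposal is correct and follows essentially the same route as the paper, whose ``proof'' is the construction immediately preceding the theorem: the kernel identification $\tilde U_\Delta(z)=U_\Delta\Gf_\Delta(z)$ via $\Gf_\Delta=U_\Delta^*\tilde U_\Delta$ and unitarity, the diagonal-trace computation matching the lattice integrals with cyclic identification $\Phi_{T+1}=\Phi_1$, and gauge independence by appeal to the cited works. Your explicit verification of the gauge axioms (holomorphy, $\Gf(0)=\id$, trace-class for $\Re(z)\ll0$) and of the commuting case $(U_\Delta\Gf_\Delta)^T=U\Gf_\Delta^T$ merely spells out what the paper leaves implicit.
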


We note that the assumptions of \autoref{thm-gauged-transfer-matrix} permit gauge families $\gf$ of the form $\gf(z)(\Phi)=\l(\prod_{j=1}^T\abs{\Phi_j}\r)^{\frac{z}{T}}$ and $\gf(z)(\Phi)=\l(\prod_{j=1}^T\abs{\Phi_{j+1}-\Phi_j}\r)^{\frac{z}{T}}$ but not $\gf(z)(\Phi)=\norm\Phi^z$. The latter is a gauge for $U=U_\Delta^T$ rather than $U_\Delta$ itself. The kernel $\kappa_U$ of $U$ is given by
\begin{align*}
  \kappa_U(\Phi_{T+1},\Phi_1)=\int_{X^{T-1}}e^{iS(\Phi_{T+1},\ldots,\Phi_1)}d\mu^{T-1}(\Phi_T,\ldots,\Phi_2).
\end{align*}
Here, we do not identify $\Phi_{T+1}$ with $\Phi_1$ yet, since that step is eventually part of taking the trace. To reduce notation, we will introduce $\hat\Phi:=(\Phi_T,\ldots,\Phi_2)$ and $\hat\mu:=\mu^{T-1}$. Then, the action $S$ is a function of $(\Phi_{T+1},\hat\Phi,\Phi_1)$ due to the open boundary conditions and the gauge family $\gf$ is a family of functions of $(\hat\Phi,\Phi_1)$. Furthermore, $U^*$ has kernel
\begin{align*}
  \kappa_{U^*}(x,y)=\kappa_U(y,x)^*=\int_{X^{T-1}}e^{-iS(y,\hat\Phi,x)}d\hat\mu(\hat\Phi).
\end{align*}
Hence, the gauged time evolution $\tilde U$ with kernel
\begin{align*}
  \kappa_{\tilde U(z)}(\Phi_{T+1},\Phi_1)=\int_{X^{T-1}}e^{iS(\Phi_{T+1},\hat\Phi,\Phi_1)}\gf(z)(\hat\Phi,\Phi_1)d\hat\mu(\hat\Phi)
\end{align*}
can be expressed as $\tilde U=U\Gf$ with $\Gf=U^*\tilde U$, i.e., $\Gf(z)$ has kernel
\begin{align*}
  \kappa_{\Gf(z)}(x,y)=&\int_X\kappa_{U^*}(x,w)\kappa_{\tilde U(z)}(w,y)d\mu(w)\\
  =&\int_X\int_{X^{T-1}}e^{-iS(w,\hat\Phi,x)}d\hat\mu(\hat\Phi)\int_{X^{T-1}}e^{iS(w,\hat\Psi,y)}\gf(z)(\hat\Psi,y)d\hat\mu(\hat\Psi)d\mu(w).
\end{align*}
This directly implies \autoref{thm-gauged-time-evolution} which is the extension of \autoref{thm-gauged-transfer-matrix} to gauges $\gf$ that cannot be split into gauges of the transfer matrix.

\begin{theorem}\label{thm-gauged-time-evolution}
  Let $U$ be the time evolution operator corresponding to an action $S$, $\gf(z)$ a family of gauge functions, $\Gf(z)$ the gauge family with the kernel
  \begin{align*}
    \kappa_{\Gf(z)}(x,y)=\int_X\int_{X^{T-1}}\int_{X^{T-1}}e^{-iS(w,\hat\Phi,x)}e^{iS(w,\hat\Psi,y)}\gf(z)(\hat\Psi,y)d\hat\mu(\hat\Phi)d\hat\mu(\hat\Psi)d\mu(w),
  \end{align*}
  and $\omega$ an observable with corresponding operator $\Omega$. Then, 
  \begin{align*}
    \langle\omega\rangle(z)=\frac{\int_{X^T}\omega(\Phi)e^{iS(\Phi)}\gf(z)(\Phi)d\mu^T(\Phi)}{\int_{X^T}e^{iS(\Phi)}\gf(z)(\Phi)d\mu^T(\Phi)}
  \end{align*}
  is the $\zeta$-regularized lattice vacuum expectation value which, in the operator formulation, corresponds to the $\zeta$-regularized vacuum expectation value
  \begin{align*}
    \langle\Omega\rangle(z)=\frac{\zeta\l(U\Gf\Omega\r)(z)}{\zeta\l(U\Gf\r)(z)}.
  \end{align*}
  In particular, $\langle\omega\rangle(0)$ is independent of the gauge $\gf$ in the sense of~\cite{hartung-iwota,hartung-jmp,hartung-jansen,hartung-jansen-gauge-fields,jansen-hartung}.
\end{theorem}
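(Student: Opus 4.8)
The plan is to reduce the statement to two ingredients that are already in hand: the explicit kernel of $\Gf$ derived in the paragraph immediately preceding the theorem, together with the abstract gauge-independence machinery invoked in \autoref{thm-gauged-transfer-matrix}. The only genuinely new content is the bookkeeping that matches the two integral expressions defining $\langle\omega\rangle(z)$ to the two operator traces defining $\langle\Omega\rangle(z)$; once this is done, the construction $\Gf=U^*\tilde U$ with $\tilde U=U\Gf$ places us exactly in the setting covered by the cited general theory.

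First I would establish the denominator identity. Since $\tilde U=U\Gf$ by construction and the nuclear trace is computed by integrating the kernel along the diagonal (as in \autoref{thm:analytic continuation}), identifying $\Phi_{T+1}$ with $\Phi_1$ gives
\[
  \tr(U\Gf)(z)=\int_X\kappa_{\tilde U(z)}(\Phi_1,\Phi_1)\,d\mu(\Phi_1)=\int_{X^T}e^{iS(\Phi)}\gf(z)(\Phi)\,d\mu^T(\Phi),
\]
which is precisely the denominator of $\langle\omega\rangle(z)$; analytic continuation in $z$ then identifies $\zeta(U\Gf)(z)$ with this integral wherever the latter converges, and hence everywhere by uniqueness of analytic continuation. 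The same computation applies to the numerator: taking $\Omega$ to be the operator corresponding to $\omega$ under the usual dictionary, the kernel of $U\Gf\Omega=\tilde U\Omega$ carries the insertion $\omega$, so its diagonal trace reproduces $\int_{X^T}\omega\, e^{iS}\gf(z)\,d\mu^T$. Dividing yields $\langle\omega\rangle(z)=\frac{\zeta(U\Gf\Omega)}{\zeta(U\Gf)}(z)=\langle\Omega\rangle(z)$, establishing the asserted correspondence.

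For gauge independence I would verify that the globally constructed $\Gf=U^*\tilde U$ satisfies the definition of an admissible gauge of $U$ in the sense required by the references cited in \autoref{thm-gauged-transfer-matrix}. Holomorphy in $z$ is inherited from $\gf$ through the kernel formula for $\Gf$; the normalization $\Gf(0)=U^*\tilde U(0)=U^*U=\id$ follows because $\gf(0)=1$ forces $\tilde U(0)=U$ and $U$ is unitary; and the asymptotic bound $\abs{\gf(z)(\Phi)}\le\alpha\norm{\Phi}^{\beta\Re(z)}$ guarantees that $U\Gf(z)$ and $U\Gf(z)\Omega$ are trace-class for $\Re(z)\ll0$. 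Once $\Gf$ is recognized as a bona fide gauge of $U$, the gauge-independence result of the abstract framework applies verbatim and yields that $\langle\omega\rangle(0)=\langle\Omega\rangle(0)$ does not depend on the choice of $\gf$.

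The main obstacle is the third step rather than the first two, which are essentially the diagonal-trace bookkeeping already rehearsed for $\tr(U_\Delta^T)$ in \autoref{sec:zeta-reg-lattice}. The substance lies in confirming that this $\Gf$ — which, unlike the transfer-matrix gauge $\Gf_\Delta$ of \autoref{thm-gauged-transfer-matrix}, need not factor through a single-step kernel and is built from the full action $S(w,\hat\Phi,x)$ — still meets every hypothesis of the external framework, in particular holomorphy and the trace-class decay for $\Re(z)\ll0$, so that the abstract gauge-independence theorem is genuinely applicable rather than merely formally invoked.
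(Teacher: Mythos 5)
Your proposal matches the paper's own treatment: the paper derives $\tilde U=U\Gf$ with $\Gf=U^*\tilde U$ and its explicit kernel in the paragraph preceding the theorem, declares that this ``directly implies'' the statement via the same diagonal-trace identification you spell out, and likewise defers the admissibility of $\Gf$ (Fourier integral operator of order $\delta\Re(z)$, e.g.\ via H\"ormander-class or polyhomogeneous $\gf$) to the cited framework, which it discusses only in the paragraph following the theorem. Your third step is simply a more explicit acknowledgment of what the paper leaves implicit, so the argument is correct and essentially identical in structure.
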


This identifies sufficient conditions on the choice of gauge $\gf_\Delta$. Primarily, we need to ensure that $\Gf(z)$ is a Fourier integral operator of order $\delta\Re(z)$ for some \mbox{$\delta>0$} for the construction of $\zeta$-regularized vacuum expectation values to make sense in the operator picture~\cite{hartung-jmp,hartung-iwota,hartung-jansen,jansen-hartung}. A sufficient condition for this to be true is for $\gf(z)$ to be in the H\"ormander class $S^{\delta\Re(z)}$~\cite{hoermander-books}, that is, $\abs{\d_\Phi^\alpha\gf(z)(\Phi)}\le c_\alpha(1+\norm\Phi)^{\delta\Re(z)-\abs{\alpha}}$ has to hold for all $\Phi$, $z$, multiindices $\alpha$, and some constant $c_\alpha$. This condition can also be weakened by assuming that the inequality holds asymptotically for $\norm\Phi\to\infty$~\cite{hartung-phd,hartung-scott,hartung-jmp,hartung-iwota}. In doing so, we can allow for homogeneous singularities at $\Phi=0$, such as, $\gf(z)(\Phi)=\norm\Phi^z$ or even consider $\gf(z)$ to be polyhomogeneous $\gf(z)\sim\sum_{\iota\in I}a_\iota(z)(\Phi)$ where $a_\iota(z)$ is homogeneous of degree $d_\iota+z$ and the $\Re(d_\iota)$ are bounded from above~\cite{hartung-phd,hartung-scott,hartung-jmp,hartung-iwota}.

\section{The classical limit}\label{sec:classical-limit}
In the construction of $\zeta$-regularized vacuum expectation values, we consider a quotient of $\zeta$-functions $\frac{\zeta(U\Gf\Omega)}{\zeta(U\Gf)}$ which in the lattice setting takes the form $\frac{\int e^{iS}\gf\omega d\mu}{\int e^{iS}\gf d\mu}$. Although the general theory of operator $\zeta$-functions would allow for different choices of $\Gf$ (and thus different choices of $\gf$) in the numerator and denominator, the definition of $\zeta$-regularized vacuum expectation values~\cite{hartung-jmp} implicitly requires us to make the same choice. This was introduced to handle the problem that $\frac{\zeta(U\Gf_1\Omega)}{\zeta(U\Gf_2)}$ can depend on the choice of $\Gf_1$ and $\Gf_2$ if $\zeta(U\Gf_1\Omega)(0)=\zeta(U\Gf_2)(0)=0$. While enforcing $\Gf_1=\Gf_2=\Gf$ does not obviously solve the problem of $\frac{\zeta(U\Gf\Omega)}{\zeta(U\Gf)}(0)$ being dependent on $\Gf$ if $\zeta(U\Gf\Omega)(0)=\zeta(U\Gf)(0)=0$ -- in fact, independence of $\Gf$ is a consequence of the physicality proof for $\zeta$-regularized vacuum expectation values~\cite{hartung-jansen} -- there is a physical reason to hope for this to be the correct requirement. In a very loose sense, we can interpret the denominator $\zeta(U\Gf)(z)$ to be the partition function of a quantum field theory $\mathrm{QFT}(z)$ with ``time evolution'' $U\Gf(z)$. For $\Re(z)\ll0$ the vacuum expectation $\langle\Omega\rangle(z)$ of $\Omega$ is then well-defined in $\mathrm{QFT}(z)$ using Feynman's construction and satisfies $\langle\Omega\rangle(z)=\frac{\zeta(U\Gf\Omega)(z)}{\zeta(U\Gf)(z)}$. The $\zeta$-regularized vacuum expectation value $\langle\Omega\rangle(0)$ can therefore be loosely understood as the analytic continuation of well-defined vacuum expectation values in Feynman's sense with respect to the holomorphic family of quantum field theories $\mathrm{QFT}$.

Of course, this interpretation requires many philosophical concessions because in the lattice picture, $\mathrm{QFT}(z)$ corresponds to the ``action'' $S(z)=S(0)-i\ln(\gf(z))$ where $S(0)$ is the action of $\mathrm{QFT}(0)$, i.e., the action of the QFT we wish to study. However, if we accept this interpretation, then we can ask the question regarding the classical limit. In general, interpreting $S(z)$ as a classical action is likely to be difficult due to $-i\ln(\gf(z))$ being complex. The closest classical action we can extract from $S(z)$ is the original action $S(0)$ which leads to the hypothesis that the classical limit should always be the classical limit of $\mathrm{QFT}(0)$ independent of the value of $z$. \autoref{thm:classical-limit} is precisely this surprising result.

\begin{theorem}\label{thm:classical-limit}
  Let $S:\ \rn^N\to\rn$ be a non-degenerate action with a unique minimum, that is, there is exactly one point $C\in\rn^N$ with $S'(C)=0$ which furthermore satisfies $\det S''(C)\ne0$. Let $\omega:\ \rn^N\to\rn$ be an observable and $\gf(z)$ a gauge function with $\fa z\in\cn:\ \gf(z)(C)\ne0$. Then,
  \begin{align*}
    \fa z\in\cn:\ \lim_{\hbar\searrow0}\langle\omega\rangle(z)=\lim_{\hbar\searrow0}\frac{\int_{\rn^N}e^{\frac{i}{\hbar}S(\Phi)}\gf(z)(\Phi)\omega(\Phi)d\Phi}{\int_{\rn^N}e^{\frac{i}{\hbar}S(\Phi)}\gf(z)(\Phi)d\Phi}=\omega(C).
  \end{align*}
  In particular, $\lim_{\hbar\searrow0}\langle\omega\rangle(0)=\omega(C)$
\end{theorem}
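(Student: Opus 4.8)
The plan is to establish \autoref{thm:classical-limit} by the method of stationary phase, applied separately to numerator and denominator, after which the universal oscillatory prefactor cancels in the quotient. Fix $z\in\cn$ and write $I(\hbar;a):=\int_{\rn^N}e^{\frac{i}{\hbar}S(\Phi)}a(\Phi)\,d\Phi$, so that the numerator is $I(\hbar;\gf(z)\omega)$ and the denominator is $I(\hbar;\gf(z))$. Because $\gf(z)(C)\ne0$, the point $C$ avoids any gauge singularity, so each amplitude is smooth near $C$; away from $C$ the amplitudes are of at most polynomial growth, so each $I(\hbar;a)$ is a well-defined oscillatory integral.

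The heart of the argument is the stationary phase asymptotic. I would localize with a smooth cut-off $\chi$ equal to $1$ on a small ball around $C$ on which $\gf(z)$ is smooth and $C$ is the only stationary point of $S$. On the complement $S'\ne0$, and repeated integration by parts (non-stationary phase) shows that this contribution is $O(\hbar^{M})$ for every $M\in\nn$; any integrable gauge singularity lives there as well, at a non-stationary point, and is therefore likewise suppressed relative to $\hbar^{N/2}$. On the localized piece the Morse lemma reduces $S$ to its quadratic normal form and the classical stationary phase formula yields
\begin{align*}
  I(\hbar;a)=(2\pi\hbar)^{\frac{N}{2}}\,\frac{e^{\frac{i}{\hbar}S(C)}\,e^{\frac{i\pi}{4}\sgn S''(C)}}{\abs{\det S''(C)}^{\frac12}}\,a(C)+O\l(\hbar^{\frac{N}{2}+1}\r),
\end{align*}
which is legitimate since $a$ is smooth at $C$.

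Taking $a=\gf(z)\omega$ and $a=\gf(z)$, the prefactor $(2\pi\hbar)^{N/2}\abs{\det S''(C)}^{-\frac12}e^{\frac{i}{\hbar}S(C)+\frac{i\pi}{4}\sgn S''(C)}$ is identical in both integrals and cancels in the quotient. What remains is
\begin{align*}
  \frac{I(\hbar;\gf(z)\omega)}{I(\hbar;\gf(z))}=\frac{\gf(z)(C)\,\omega(C)+O(\hbar)}{\gf(z)(C)+O(\hbar)}\xrightarrow[\hbar\searrow0]{}\omega(C),
\end{align*}
where the passage to the limit is justified precisely because $\gf(z)(C)\ne0$ keeps the denominator bounded away from $0$ for small $\hbar$. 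As $z\in\cn$ was arbitrary this proves the identity for every $z$; specializing to $z=0$, where $\gf(0)=1$, gives the stated corollary.

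The main obstacle I anticipate is not the leading-order cancellation, which is essentially forced, but verifying that the stationary phase expansion genuinely applies to the gauge amplitudes. The gauges of interest (for instance $\gf(z)(\Phi)=\norm\Phi^{z}$) are neither compactly supported nor globally smooth, so one must simultaneously control their polynomial growth at infinity and their homogeneous singularity at the origin. The growth is handled by the non-stationary phase estimate, which trades each integration by parts for a power of $\hbar$ wherever $S'\ne0$ and thus reduces the far-field contribution below any fixed power of $\hbar$; the singularity, sitting at a non-critical point away from $C$, contributes at an order strictly larger than $\hbar^{N/2}$ and hence drops out of the ratio. Making both suppression statements precise for the polyhomogeneous gauge class, together with the mild growth hypotheses on $S$ needed for the oscillatory integrals to converge, is the single place where genuine technical care is required.
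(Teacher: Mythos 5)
Your proof is correct and takes essentially the same route as the paper: a stationary phase expansion of numerator and denominator around the unique non-degenerate critical point $C$, cancellation of the common prefactor $(2\pi\hbar)^{\frac{N}{2}}\abs{\det S''(C)}^{-\frac12}e^{\frac{i}{\hbar}S(C)+\frac{i\pi}{4}\sgn S''(C)}\gf(z)(C)$, and the hypothesis $\gf(z)(C)\ne0$ securing the passage to the limit. The paper simply invokes the stationary phase approximation in one step, so the localization, non-stationary phase suppression, and treatment of the gauge singularity that you spell out merely make explicit the technical justification the paper leaves implicit.
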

\begin{proof}
  Since we are interested in the classical limit $\hbar\searrow0$, we will expand the integrals in
  \begin{align*}
    \langle\omega\rangle(z)=\frac{\int_{\rn^N}e^{\frac{i}{\hbar}S(\Phi)}\gf(z)(\Phi)\omega(\Phi)d\Phi}{\int_{\rn^N}e^{\frac{i}{\hbar}S(\Phi)}\gf(z)(\Phi)d\Phi}.
  \end{align*}
  using stationary phase approximation and obtain
  \begin{align*}
    \langle\omega\rangle(z)=&\frac{e^{\frac{i}{\hbar}S(C)}\abs{\det S''(C)}^{-\frac12}e^{\frac{i\pi}{4}\sgn S''(C)}\gf(z)(C)\omega(C)\l(2\pi\hbar\r)^{\frac{N}{2}}+o\l(\hbar^{\frac{N}{2}}\r)}{e^{\frac{i}{\hbar}S(C)}\abs{\det S''(C)}^{-\frac12}e^{\frac{i\pi}{4}\sgn S''(C)}\gf(z)(C)\l(2\pi\hbar\r)^{\frac{N}{2}}+o\l(\hbar^{\frac{N}{2}}\r)}\\
    =&\frac{\omega(C)+o(1)}{1+o(1)}.
  \end{align*}
  Hence, we observe
  \begin{align*}
    \fa z\in\cn:\ \lim_{\hbar\searrow0}\langle\omega\rangle(z)=\omega(C).
  \end{align*}
\end{proof}

\begin{example*}
  Consider the harmonic oscillator on the lattice with action
  \begin{align*}
    S:\ \rn^\tau\to\rn;\ x\mapsto a\sum_{j=0}^{\tau-1}\frac{m}{2}\frac{(x_{j+1}-x_j)^2}{a^2}+\mu^2x_j^2.
  \end{align*}
  Then
  \begin{align*}
    \d_kS(x)=&m\frac{x_{k}-x_{k-1}}{a}-m\frac{x_{k+1}-x_k}{a}+2a\mu^2x_k\\
    =&m\frac{2x_{k}-x_{k-1}-x_{k+1}}{a}+2a\mu^2x_k\\
    =&(Mx)_k
  \end{align*}
  with $M_{j,k}=\l(\frac{2m}{a}+2a\mu^2\r)\delta_{j,k}-\frac{m}{a}\delta_{j+1,k}-\frac{m}{a}\delta_{j-1,k}$. Thus, $M$ is a circulant matrix and setting $\Sigma_0=(\delta_{j+1,k})_{j,k}$ implies $M=\frac{2m}{a}+2a\mu^2-\frac{m}{a}\Sigma_0-\frac{m}{a}\Sigma_0^{\tau-1}=p(\Sigma_0)$. Since the eigenvalues of $\Sigma_0$ are $\lambda_k=e^{\frac{2\pi ik}{\tau}}$, the eigenvalues of $M$ are 
  \begin{align*}
    p(\lambda_k)=\frac{2m}{a}+2a\mu^2-\frac{m}{a}e^{\frac{2\pi ik}{\tau}}-\frac{m}{a}e^{-\frac{2\pi ik}{\tau}}=2a\mu^2+\frac{2m}{a}\l(1-\cos\l(\frac{2\pi k}{\tau}\r)\r).
  \end{align*}
  Hence, $S'(x)=0$ fails to have a unique solution ($C=0$) if and only if $\ex k:\ p(\lambda_k)=0$. The observation $p(\lambda_k)\ge2a\mu^2>0$ therefore implies that $C=0$ is the unique critical point, and we furthermore observe $S''(C)=M\in GL(N)$. Thus, the classical limit $\hbar\searrow0$ of $\langle\omega\rangle(z)$ is the observable $\omega$ evaluated on the classical vacuum $C=0$.
\end{example*}

\begin{remark*}
  The proof of \autoref{thm:classical-limit} can be extended to the case of multiple critical points $C$ with $S'(C)=0$ and $\det S''(C)\ne0$ such as in the case of a double well potential. Let $\Gamma$ be the set of critical points and $\omega$ and $\gf$ as in \autoref{thm:classical-limit}. Then, stationary phase approximation yields
  \begin{align*}
    \langle\omega\rangle(z)=&\frac{\sum_{C\in\Gamma}e^{\frac{i}{\hbar}S(C)}\abs{\det S''(C)}^{-\frac12}e^{\frac{i\pi}{4}\sgn S''(C)}\gf(z)(C)\omega(C)+o\l(1\r)}{\sum_{C\in\Gamma}e^{\frac{i}{\hbar}S(C)}\abs{\det S''(C)}^{-\frac12}e^{\frac{i\pi}{4}\sgn S''(C)}\gf(z)(C)+o\l(1\r)}
  \end{align*}
  and thus
  \begin{align*}
    \langle\omega\rangle(0)=&\frac{\sum_{C\in\Gamma}e^{\frac{i}{\hbar}S(C)}\abs{\det S''(C)}^{-\frac12}e^{\frac{i\pi}{4}\sgn S''(C)}\omega(C)+o\l(1\r)}{\sum_{C\in\Gamma}e^{\frac{i}{\hbar}S(C)}\abs{\det S''(C)}^{-\frac12}e^{\frac{i\pi}{4}\sgn S''(C)}+o\l(1\r)}.
  \end{align*}
  In the case of the double well potential, where we have exactly two critical points $C_1$ and $C_2$ with $S(C_1)=S(C_2)$ and $S''(C_1)=S''(C_2)$, we can further simplify to
  \begin{align*}
    \langle\omega\rangle(z)=&\frac{\gf(z)(C_1)\omega(C_1)+\gf(z)(C_2)\omega(C_2)+o\l(1\r)}{\gf(z)(C_1)+\gf(z)(C_2)+o\l(1\r)},\\
    \langle\omega\rangle(0)=&\frac{\omega(C_1)+\omega(C_2)+o\l(1\r)}{2+o\l(1\r)},
  \end{align*}
  and thus
  \begin{align*}
    \lim_{\hbar\to0}\langle\omega\rangle(0)=&\frac{\omega(C_1)+\omega(C_2)}{2}.
  \end{align*}
\end{remark*}

\section{The harmonic oscillator: four gauges and a Wick rotation}\label{sec:harmonic_oscillator_analytic}
In this section we will consider the harmonic oscillator and analytically compute the ground state energy with five different approaches. First (in \autoref{sec:harmonic-wick}), we will compute it using a Wick rotation; thus reproducing well-known results which will serve as a basis for comparison with the four different choices of gauge~$\gf$. These four choices of gauge arise from two binary choices. The first choice is whether we wish our gauge to be local in time or global, i.e., whether or not it is possible to split the gauge function $\gf(z)(\Phi)$ into a product $\prod_j\gf_\Delta(z)(\Phi_{j+1},\Phi_j)$ and thus gauging the transfer matrix $U_\Delta$. In applications, this choice will likely be a matter of convenience. The more important binary decision is related to the behavior of the gauge near zero. Since the regularization is introduced through the asymptotic behavior of the gauge for large $\norm\Phi$, there is a considerable degree of freedom in any fixed compact set. In particular, the functional analysis of a $\zeta$-lattice field theory is easier if the gauge is of the form $\gf(z)(\Phi)=\norm\Phi^z$ but this introduces a pole in zero. The corresponding integrals are still well-defined in terms of homogeneous distributions~\cite{hartung-iwota,hartung-jmp,hartung-jansen,hartung-jansen-gauge-fields,jansen-hartung} but not absolutely integrable. Thus, for numerical purposes, absolutely integrable gauges like $\gf(z)(\Phi)=(1+\norm\Phi)^z$ are preferable. In \autoref{sec:harmonic-global-distrib}, \autoref{sec:harmonic-global-absint}, \autoref{sec:harmonic-local-distrib}, and \autoref{sec:harmonic-local-absint} we will consider each of these cases (cf., \autoref{table:gauges}).

\begin{table}
  \caption{Different choices of gauge for the $\zeta$-regularized hamonic oscillator on the lattice with Lorentzian background.}\label{table:gauges}
  \begin{center}
    \begin{tabular}{|c|c|c|c|}
      \hline
      \multicolumn{2}{|c}{}&\multicolumn{2}{|c|}{regularization}\\
      \cline{3-4}
      \multicolumn{2}{|c|}{}&distributional&abs. integrable\\
      \hline
      \parbox[t]{2mm}{\multirow{6}{*}{\rotatebox[origin=c]{90}{gauging}}}&&&\\
      &global in time&\autoref{sec:harmonic-global-distrib}&\autoref{sec:harmonic-global-absint}\\
      &&&\\
      \cline{2-4}
      &&&\\
      &local in time&\autoref{sec:harmonic-local-distrib}&\autoref{sec:harmonic-local-absint}\\
      &&&\\
      \hline
    \end{tabular}
  \end{center}
\end{table}

In our treatment of the harmonic oscillator, we will absorb $\hbar$ into the action
\begin{align*}
  S:\ \rn^\tau\to\rn;\ \Phi\mapsto \frac{a}{\hbar}\sum_{j=0}^{\tau-1}\frac{m}{2}\frac{(\Phi_{j+1}-\Phi_j)^2}{a^2}+\frac{m\omega^2}{2}\Phi_j^2.
\end{align*}
Given an observable $\Omega$ and a gauge $\gf$, we hence want to solve
\begin{align*}
  \langle\Omega\rangle(z)=\frac{\int_{\rn^\tau}e^{iS(\Phi)}\gf(z)(\Phi)\Omega(\Phi)d\Phi}{\int_{\rn^\tau}e^{iS(\Phi)}\gf(z)(\Phi)d\Phi}
\end{align*}
in the $\zeta$-regularized cases and
\begin{align*}
  \langle\Omega\rangle_E=\frac{\int_{\rn^\tau}e^{-S(\Phi)}\Omega(\Phi)d\Phi}{\int_{\rn^\tau}e^{-S(\Phi)}d\Phi}
\end{align*}
in the Wick rotated case. For the calculations in this section, we will consider observables $\Omega$ of the form
\begin{align*}
  \Phi^\alpha=\frac{1}{\tau}\sum_{j=0}^{\tau-1}\Phi_j^\alpha.
\end{align*}
Many interesting observables are related to these $\Phi^\alpha$. For instance, the ground state energy is given by $iE_0=m\omega^2\langle\Phi^2\rangle(0)$ or $E_0=m\omega^2\langle\Phi^2\rangle_E$ where we have lost a factor of $i$ due to the Wick rotation.

We will compute the continuum limit $E_0^{\mathrm{continuum}}=\frac{\hbar\omega}{2}$ in the Wick rotated case explicitly. For all other cases, we will show that the ground state energy $E_0$ as computed with the $\zeta$-regularized lattice theory coincides with $E_0$ as computed with the Wick rotated lattice theory. To achieve this, we will define the matrix
\begin{align*}
  \Sigma:=\frac{1}{\hbar}\l(\l(\frac{m}{a}+\frac{am\omega^2}{2}\r)\delta_{j,k}-\frac{m}{2a}(\delta_{j,k+1}+\delta_{j+1,k})\r)_{j,k}
\end{align*}
because
\begin{align*}
  S(\Phi)=\langle\Phi,\Sigma\Phi\rangle.
\end{align*}
$\Sigma$ is a circulant matrix and setting $\Sigma_0=(\delta_{j+1,k})_{j,k}$ implies 
\begin{align*}
  \Sigma=\frac{m}{a\hbar}+\frac{am\omega^2}{2\hbar}-\frac{m}{2a\hbar}\Sigma_0-\frac{m}{2a\hbar}\Sigma_0^{\tau-1}=p(\Sigma_0).
\end{align*}
Since the eigenvalues of $\Sigma_0$ are $\mu_k=e^{\frac{2\pi ik}{\tau}}$, the eigenvalues of $\Sigma$ are 
\begin{align*}
  p(\mu_k)=\frac{m}{a\hbar}+\frac{am\omega^2}{2\hbar}-\frac{m}{2a\hbar}e^{\frac{2\pi ik}{\tau}}-\frac{m}{2a\hbar}e^{-\frac{2\pi ik}{\tau}}=\frac{am\omega^2}{2\hbar}+\frac{m}{a\hbar}\l(1-\cos\l(\frac{2\pi k}{\tau}\r)\r).
\end{align*}
Hence, $\Sigma$ is strictly positive and 
\begin{align*}
  S(\Phi)=\norm{\sqrt\Sigma\Phi}^2.
\end{align*}
In other words, the change of coordinates $\Psi:=\sqrt\Sigma\Phi$ implies
\begin{align*}
  \langle\Omega\rangle(z)=&\frac{\int_{\rn^\tau}e^{i\norm\Psi^2}\gf(z)\l(\sqrt\Sigma^{-1}\Psi\r)\Omega\l(\sqrt\Sigma^{-1}\Psi\r)d\Psi}{\int_{\rn^\tau}e^{i\norm\Psi^2}\gf(z)\l(\sqrt\Sigma^{-1}\Psi\r)d\Psi}\\
  =&\frac{\int_{\d B_{\rn^\tau}}\int_{\rn_{>0}}e^{ir^2}\gf(z)\l(r\sqrt\Sigma^{-1}\Psi\r)\Omega\l(r\sqrt\Sigma^{-1}\Psi\r)r^{\tau-1}drd\vol_{\d B_{\rn^\tau}}(\Psi)}{\int_{\d B_{\rn^\tau}}\int_{\rn_{>0}}e^{ir^2}\gf(z)\l(r\sqrt\Sigma^{-1}\Psi\r)r^{\tau-1}drd\vol_{\d B_{\rn^\tau}}(\Psi)}\\
  =&\frac{\int_{\d B_{\rn^\tau}}\int_{\rn_{>0}}e^{is}\gf(z)\l(\sqrt{s}\sqrt\Sigma^{-1}\Psi\r)\Omega\l(\sqrt{s}\sqrt\Sigma^{-1}\Psi\r)s^{\frac{\tau}{2}-1}dsd\vol_{\d B_{\rn^\tau}}(\Psi)}{\int_{\d B_{\rn^\tau}}\int_{\rn_{>0}}e^{is}\gf(z)\l(\sqrt{s}\sqrt\Sigma^{-1}\Psi\r)s^{\frac{\tau}{2}-1}dsd\vol_{\d B_{\rn^\tau}}(\Psi)}
\end{align*}
and
\begin{align*}
  \langle\Omega\rangle_E=\frac{\int_{\rn^\tau}e^{-\norm\Psi^2}\Omega\l(\sqrt\Sigma^{-1}\Psi\r)d\Psi}{\int_{\rn^\tau}e^{-\norm\Psi^2}d\Psi}
\end{align*}
where $\sqrt\Sigma^{-1}$ has the eigenvalues
\begin{align*}
  \lambda_k=\l(\frac{am\omega^2}{2\hbar}+\frac{m}{a\hbar}\l(1-\cos\l(\frac{2\pi k}{\tau}\r)\r)\r)^{-\frac{1}{2}}.
\end{align*}

\subsection{The Wick rotation}\label{sec:harmonic-wick}
We can compute $\langle\Omega\rangle_E$ by choosing an eigenbasis $(e_j)_j$ of $\sqrt\Sigma^{-1}$ such that $\Psi = \sum_j\Psi_je_j$ and $\sqrt\Sigma^{-1}e_j=\lambda_je_j$. Then, we obtain
\begin{align*}
  \langle\Omega\rangle_E=&\frac{\int_{\rn^\tau}e^{-\norm{\Psi}^2}\Omega\l(\sqrt\Sigma^{-1}\Psi\r)d\Psi}{\int_{\rn^\tau}e^{-\norm{\Psi}^2}d\Psi}\\
  =&\frac{\int_{\rn^\tau}\prod_{j=0}^{\tau-1}e^{-\Psi_j^2}\frac{1}{\tau}\sum_{k=0}^{\tau-1}(\lambda_k\Psi_k)^{\alpha}d\Psi}{\int_{\rn^\tau}\prod_{j=0}^{\tau-1}e^{-\Psi_j^2}d\Psi}\\
  =&\frac{1}{\tau}\sum_{k=0}^{\tau-1}\frac{\int_{\rn}e^{-\Psi_k^2}(\lambda_k\Psi_k)^{\alpha}d\Psi_k}{\int_{\rn}e^{-\Psi_k^2}d\Psi_k}\\
  =&\frac{1}{\tau}\sum_{k=0}^{\tau-1}\lambda_k^\alpha\frac{\l((-1)^\alpha+1\r)\Gamma\l(\frac{\alpha+1}{2}\r)}{2\Gamma\l(\frac{1}{2}\r)}.
\end{align*}
For $\alpha=2$, we conclude
\begin{align*}
  \langle\Phi^2\rangle_E=&\frac{1}{\tau}\sum_{k=0}^{\tau-1}\lambda_k^2\frac{\Gamma\l(\frac{1}{2}+1\r)}{\Gamma\l(\frac{1}{2}\r)}=\frac{1}{2\tau}\sum_{k=0}^{\tau-1}\lambda_k^2.
\end{align*}
To compute $\lim_{\tau\to\infty}\langle\Phi^2\rangle_E$, we use $\lim_{\tau\to\infty}\frac{1}{\tau}\sum_{k=0}^{\tau-1}f\l(\frac{2\pi k}{\tau}\r)=\frac{1}{2\pi}\int_0^{2\pi}f(x)dx$ to obtain\footnote{This $\tau\to\infty$ result coincides with the $N\to\infty$ limit of equation (C.29) in~\cite{Creutz-Freedman}. }
\begin{align*}
  \lim_{\tau\to\infty}\langle\Phi^2\rangle_E=&\frac{1}{4\pi}\int_0^{2\pi}\l(\frac{am\omega^2}{2\hbar}+\frac{m}{a\hbar}\l(1-\cos\l(x\r)\r)\r)^{-1}dx\\
  =&\l.\frac{\arctan\l(\sqrt{\frac{2m}{a\hbar}\frac{2\hbar}{am\omega^2}+1}\tan\l(\frac{x}{2}\r)\r)}{2\pi\sqrt{\frac{am\omega^2}{2\hbar}\l(\frac{am\omega^2}{2\hbar}+\frac{2m}{a\hbar}\r)}}\r|_{x=0}^{\pi}\\
  &+\l.\frac{\arctan\l(\sqrt{\frac{2m}{a\hbar}\frac{2\hbar}{am\omega^2}+1}\tan\l(\frac{x}{2}\r)\r)}{2\pi\sqrt{\frac{am\omega^2}{2\hbar}\l(\frac{am\omega^2}{2\hbar}+\frac{2m}{a\hbar}\r)}}\r|_{x=\pi}^{2\pi}\\
  =&\frac{\hbar}{2m\omega\sqrt{\frac{a^2\omega^2}{4}+1}}
\end{align*}
and thus
\begin{align*}
  E_0^{\mathrm{continuum}}=\lim_{a\to0}\lim_{\tau\to\infty}m\omega^2\langle\Phi^2\rangle_E=&\frac{\hbar\omega}{2}.
\end{align*}

For the $\zeta$-regularized computations to be consistent and independent of the choice of gauge, it suffices to show $\langle\Phi^2\rangle(0)=\frac{i}{2\tau}\sum_{k=0}^{\tau-1}\lambda_k^2$. If that is the case, we not only obtain the correct continuum limit for the ground state energy of the harmonic oscillator on the lattice, but the ground state energy coincides with the Wick rotated ground state energy result for all choices of lattice spacing $a$ and number of lattice sites $\tau$. 

\subsection{The global distributional gauge}\label{sec:harmonic-global-distrib}

The first type of gauge will be global in time and distributionally regularizing. These are generally of the form 
\begin{align*}
  \fa r\in\rn_{>0}\ \fa\Phi\in\d B_{\rn^\tau}:\ \gf(z)(r,\Phi)=r^z\gf_\d(z)(\Phi)=r^z\gf_\d(z)\l(\sqrt\Sigma^{-1}\Psi\r).
\end{align*}
Using such a gauge, we can evaluate the integral with respect to $s$ in 
\begin{align*}
  \langle\Omega\rangle(z)=&\frac{\int_{\d B_{\rn^\tau}}\int_{\rn_{>0}}e^{is}\gf(z)\l(\sqrt{s}\sqrt\Sigma^{-1}\Psi\r)\Omega\l(\sqrt{s}\sqrt\Sigma^{-1}\Psi\r)s^{\frac{\tau}{2}-1}dsd\vol_{\d B_{\rn^\tau}}(\Psi)}{\int_{\d B_{\rn^\tau}}\int_{\rn_{>0}}e^{is}\gf(z)\l(\sqrt{s}\sqrt\Sigma^{-1}\Psi\r)s^{\frac{\tau}{2}-1}dsd\vol_{\d B_{\rn^\tau}}(\Psi)}
\end{align*}
via analytic continuation of the Laplace transform $\Lp$ and obtain
\begin{align*}
  I(z):=&\int_{\d B_{\rn^\tau}}\int_{\rn_{>0}}e^{is}\gf(z)\l(\sqrt{s}\sqrt\Sigma^{-1}\Psi\r)\Omega\l(\sqrt{s}\sqrt\Sigma^{-1}\Psi\r)s^{\frac{\tau}{2}-1}dsd\vol_{\d B_{\rn^\tau}}(\Psi)\\
  =&\int_{\d B_{\rn^\tau}}\int_{\rn_{>0}}e^{is}\gf_\d(z)\l(\sqrt\Sigma^{-1}\Psi\r)\Omega\l(\sqrt\Sigma^{-1}\Psi\r)s^{\frac{\tau+z+\alpha}{2}-1}dsd\vol_{\d B_{\rn^\tau}}(\Psi)\\
  =&\int_{\d B_{\rn^\tau}}\gf_\d(z)\l(\sqrt\Sigma^{-1}\Psi\r)\Omega\l(\sqrt\Sigma^{-1}\Psi\r)\Lp\l(s\mapsto s^{\frac{\tau+z+\alpha}{2}-1}\r)(-i)d\vol_{\d B_{\rn^\tau}}(\Psi)\\
  =&i^{\frac{\tau+z+\alpha}{2}}\Gamma\l(\frac{\tau+z+\alpha}{2}\r)\int_{\d B_{\rn^\tau}}\gf_\d(z)\l(\sqrt\Sigma^{-1}\Psi\r)\Omega\l(\sqrt\Sigma^{-1}\Psi\r)d\vol_{\d B_{\rn^\tau}}(\Psi).
\end{align*}
Hence,
\begin{align*}
  \langle\Omega\rangle(z)=&\frac{i^{\frac{\tau+z+\alpha}{2}}\Gamma\l(\frac{\tau+z+\alpha}{2}\r)\int_{\d B_{\rn^\tau}}\gf_\d(z)\l(\sqrt\Sigma^{-1}\Psi\r)\Omega\l(\sqrt\Sigma^{-1}\Psi\r)d\vol_{\d B_{\rn^\tau}}(\Psi)}{i^{\frac{\tau+z}{2}}\Gamma\l(\frac{\tau+z}{2}\r)\int_{\d B_{\rn^\tau}}\gf_\d(z)\l(\sqrt\Sigma^{-1}\Psi\r)d\vol_{\d B_{\rn^\tau}}(\Psi)}\\
  =&\frac{i^{\frac{\alpha}{2}}\Gamma\l(\frac{\tau+z+\alpha}{2}\r)\int_{\d B_{\rn^\tau}}\gf_\d(z)\l(\sqrt\Sigma^{-1}\Psi\r)\Omega\l(\sqrt\Sigma^{-1}\Psi\r)d\vol_{\d B_{\rn^\tau}}(\Psi)}{\Gamma\l(\frac{\tau+z}{2}\r)\int_{\d B_{\rn^\tau}}\gf_\d(z)\l(\sqrt\Sigma^{-1}\Psi\r)d\vol_{\d B_{\rn^\tau}}(\Psi)}
\end{align*}
and
\begin{align*}
  \langle\Omega\rangle(0)=&\frac{i^{\frac{\alpha}{2}}\Gamma\l(\frac{\tau+\alpha}{2}\r)\int_{\d B_{\rn^\tau}}\Omega\l(\sqrt\Sigma^{-1}\Psi\r)d\vol_{\d B_{\rn^\tau}}(\Psi)}{\Gamma\l(\frac{\tau}{2}\r)\vol_{\d B_{\rn^\tau}}(\d B_{\rn^\tau})}.
\end{align*}
If we choose the eigenbasis of $\Sigma$ to parametrize $\rn^\tau$ again, then 
\begin{align*}
  \int_{\d B_{\rn^\tau}}\Omega\l(\sqrt\Sigma^{-1}\Psi\r)d\vol_{\d B_{\rn^\tau}}(\Psi)=&\frac1\tau\sum_{j=0}^{\tau-1}\lambda_j^\alpha\int_{\d B_{\rn^\tau}}\Psi_j^\alpha d\vol_{\d B_{\rn^\tau}}(\Psi)
\end{align*}
needs to be evaluated. In this case, we can perform the calculation analytically
\begin{align*}
  \int_{\d B_{\rn^\tau}}\Psi_j^\alpha d\vol_{\d B_{\rn^\tau}}(\Psi)=&\int_0^\pi\ldots\int_0^\pi\int_0^{2\pi}\cos(\phi_1)^\alpha\prod_{j=1}^{\tau-2}\sin(\phi_j)^{\tau-1-j}d\phi_{\tau-1}\ldots d\phi_1\\
  =&\vol_{\d B_{\rn^{\tau-1}}}\l(\d B_{\rn^{\tau-1}}\r)\int_0^\pi\cos(\phi)^\alpha\sin(\phi)^{\tau-2}d\phi\\
  =&\vol_{\d B_{\rn^{\tau-1}}}\l(\d B_{\rn^{\tau-1}}\r)\frac{\l((-1)^\alpha+1\r)\Gamma\l(\frac{\alpha+3}{2}\r)\Gamma\l(\frac{\tau-1}{2}\r)}{(\alpha+1)\Gamma\l(\frac{\tau+\alpha}{2}\r)}
\end{align*}
which yields
\begin{align*}
  \langle\Omega\rangle(0)=&\frac1\tau\sum_{j=0}^{\tau-1}\lambda_j^\alpha\frac{i^{\frac{\alpha}{2}}\Gamma\l(\frac{\tau+\alpha}{2}\r)\vol_{\d B_{\rn^{\tau-1}}}\l(\d B_{\rn^{\tau-1}}\r)\l((-1)^\alpha+1\r)\Gamma\l(\frac{\alpha+3}{2}\r)\Gamma\l(\frac{\tau-1}{2}\r)}{\Gamma\l(\frac{\tau}{2}\r)\vol_{\d B_{\rn^\tau}}(\d B_{\rn^\tau})(\alpha+1)\Gamma\l(\frac{\tau+\alpha}{2}\r)}\\
  =&\frac{i^{\frac{\alpha}{2}}}{(\alpha+1)\tau}\sum_{j=0}^{\tau-1}\lambda_j^\alpha\frac{\vol_{\d B_{\rn^{\tau-1}}}\l(\d B_{\rn^{\tau-1}}\r)\l((-1)^\alpha+1\r)\Gamma\l(\frac{\alpha+3}{2}\r)\Gamma\l(\frac{\tau-1}{2}\r)}{\Gamma\l(\frac{\tau}{2}\r)\vol_{\d B_{\rn^\tau}}(\d B_{\rn^\tau})}\\
  =&\frac{i^{\frac{\alpha}{2}}\l((-1)^\alpha+1\r)\Gamma\l(\frac{\alpha+3}{2}\r)}{(\alpha+1)\sqrt\pi\tau}\sum_{j=0}^{\tau-1}\lambda_j^\alpha\\
  =&\frac{i^{\frac{\alpha}{2}}\l((-1)^\alpha+1\r)\Gamma\l(\frac{\alpha+1}{2}\r)}{2\sqrt\pi\tau}\sum_{j=0}^{\tau-1}\lambda_j^\alpha.
\end{align*}
In particular, for $\alpha=2$ we obtain
\begin{align*}
  \langle\Phi^2\rangle(0)=&\frac{i\Gamma\l(\frac{3}{2}\r)}{\sqrt\pi\tau}\sum_{j=0}^{\tau-1}\lambda_j^2=\frac{i}{2\tau}\sum_{j=0}^{\tau-1}\lambda_j^2
\end{align*}
and thus the correct ground state energy.

\subsection{The global absolutely integrable gauge}\label{sec:harmonic-global-absint}

For the gauge to yield an absolutely integrable regularization, we need to remove the pole in $0$. For instance, a gauge of the form 
\begin{align*}
  \fa r\in\rn_{>0}\ \fa\Phi\in\d B_{\rn^\tau}:\ \gf(z)(r,\Phi)=\begin{cases}1&,\ r\le1\\\gf_\d(z)(\Phi)r^z&,\ r>1\end{cases}
\end{align*}
may be chosen. Then
\begin{align*}
  I(z):=&\int_{\d B_{\rn^\tau}}\int_{\rn_{>0}}e^{is}\gf(z)\l(\sqrt{s}\sqrt\Sigma^{-1}\Psi\r)\Omega\l(\sqrt{s}\sqrt\Sigma^{-1}\Psi\r)s^{\frac{\tau}{2}-1}dsd\vol_{\d B_{\rn^\tau}}(\Psi)\\
  =&\int_{\d B_{\rn^\tau}}\Omega\l(\sqrt\Sigma^{-1}\Psi\r)d\vol_{\d B_{\rn^\tau}}(\Psi)\int_0^1e^{is}s^{\frac{\tau+\alpha}{2}-1}ds\\
  &+\int_{\d B_{\rn^\tau}}\gf_\d(z)\l(\sqrt\Sigma^{-1}\Psi\r)\Omega\l(\sqrt\Sigma^{-1}\Psi\r)d\vol_{\d B_{\rn^\tau}}(\Psi)\int_{\rn_{>1}}e^{is}s^{\frac{\tau+z+\alpha}{2}-1}ds\\
  =&\int_{\d B_{\rn^\tau}}\Omega\l(\sqrt\Sigma^{-1}\Psi\r)d\vol_{\d B_{\rn^\tau}}(\Psi)i^{\frac{\tau+\alpha}{2}}\gamma\l(\frac{\tau+\alpha}{2},-i\r)\\
  &+\int_{\d B_{\rn^\tau}}\gf_\d(z)\l(\sqrt\Sigma^{-1}\Psi\r)\Omega\l(\sqrt\Sigma^{-1}\Psi\r)d\vol_{\d B_{\rn^\tau}}(\Psi)i^{\frac{\tau+z+\alpha}{2}}\Gamma\l(\frac{\tau+z+\alpha}{2},-i\r).
\end{align*}
In particular, for $\gf_\d(z)=1$ we obtain
\begin{align*}
  I(z)=&\int_{\d B_{\rn^\tau}}\Omega\l(\sqrt\Sigma^{-1}\Psi\r)d\vol_{\d B_{\rn^\tau}}(\Psi)i^{\frac{\tau+\alpha}{2}}\l(\gamma\l(\frac{\tau+\alpha}{2},-i\r)+\Gamma\l(\frac{\tau+z+\alpha}{2},-i\r)i^{\frac{z}{2}}\r)
\end{align*}
and thus
\begin{align*}
  \langle\Omega\rangle(z)=\frac{\int_{\d B_{\rn^\tau}}\Omega\l(\sqrt\Sigma^{-1}\Psi\r)d\vol_{\d B_{\rn^\tau}}(\Psi)}{\vol_{\d B_{\rn^\tau}}(\d B_{\rn^\tau})}\frac{i^{\frac{\alpha}{2}}\l(\gamma\l(\frac{\tau+\alpha}{2},-i\r)+\Gamma\l(\frac{\tau+z+\alpha}{2},-i\r)i^{\frac{z}{2}}\r)}{\l(\gamma\l(\frac{\tau}{2},-i\r)+\Gamma\l(\frac{\tau+z}{2},-i\r)i^{\frac{z}{2}}\r)}.
\end{align*}
At $z=0$, we may use
\begin{align*}
  \frac{i^{\frac{\alpha}{2}}\l(\gamma\l(\frac{\tau+\alpha}{2},-i\r)+\Gamma\l(\frac{\tau+\alpha}{2},-i\r)\r)}{\l(\gamma\l(\frac{\tau}{2},-i\r)+\Gamma\l(\frac{\tau}{2},-i\r)\r)}=\frac{i^{\frac{\alpha}{2}}\Gamma\l(\frac{\tau+\alpha}{2}\r)}{\Gamma\l(\frac{\tau}{2}\r)}
\end{align*}
and comparing the remaining integrals to the results of \autoref{sec:harmonic-global-distrib}, we observe that the two choices of gauge give the same result (as expected since we have gauge invariance). At the same time this appears a lot more complicated than the computation in \autoref{sec:harmonic-global-distrib}. However, there is a major advantage in that the initial $\rn^\tau$ integrals can be evaluated numerically for $\Re(z)\ll0$ without using the Laplace transform. Hence, we can compute $\frac{\int_{\d B_{\rn^\tau}}\Omega\l(\sqrt\Sigma^{-1}\Psi\r)d\vol_{\d B_{\rn^\tau}}(\Psi)}{\vol_{\d B_{\rn^\tau}}(\d B_{\rn^\tau})}$, which requires integration over high dimensional spheres, by computing
\begin{align*}
  \frac{\gamma\l(\frac{\tau}{2},-i\r)+\Gamma\l(\frac{\tau+z}{2},-i\r)i^{\frac{z}{2}}}{i^{\frac{\alpha}{2}}\l(\gamma\l(\frac{\tau+\alpha}{2},-i\r)+\Gamma\l(\frac{\tau+z+\alpha}{2},-i\r)i^{\frac{z}{2}}\r)}\frac{\int_{\rn^\tau}e^{i\norm{\Psi}^2}\gf(z)\l(\sqrt\Sigma^{-1}\Psi\r)\Omega\l(\sqrt\Sigma^{-1}\Psi\r)d\Psi}{\int_{\rn^\tau}e^{i\norm{\Psi}^2}\gf(z)\l(\sqrt\Sigma^{-1}\Psi\r)d\Psi}
\end{align*}
for values of $z$ with $\Re(z)\ll0$. This may be easier to implement as more numerical techniques have been developed for integrating over $\rn^\tau$ than $\d B_{\rn^\tau}$.

\subsection{The local distributional gauge}\label{sec:harmonic-local-distrib}
With the third approach we want to gauge the transfer matrix (as opposed to ``only'' gauging the entire time evolution operator). This is often useful because it retains locality in time -- a property frequently exploited in many lattice methods; e.g., when computing correlation functions. In terms of the following computation, this choice of gauge allows us to use Fubini as in the Wick rotated computation. Let us start with a generic $\gf(z)(\Phi)=\prod_{j=0}^{\tau-1}\gf_\Delta(\Phi_j)$. Then we observe
\begin{align*}
  \langle\Omega\rangle(z)=&\frac{\int_{\rn^\tau}e^{i\norm{\Psi}^2}\gf(z)\l(\sqrt\Sigma^{-1}\Psi\r)\Omega\l(\sqrt\Sigma^{-1}\Psi\r)d\Psi}{\int_{\rn^\tau}e^{i\norm{\Psi}^2}\gf(z)\l(\sqrt\Sigma^{-1}\Psi\r)d\Psi}\\
  =&\frac{\int_{\rn^\tau}\prod_{j=0}^{\tau-1}e^{i\Psi_j^2}\gf_\Delta(z)\l(\lambda_j\Psi_j\r)\frac{1}{\tau}\sum_{k=0}^{\tau-1}(\lambda_k\Psi_k)^{\alpha}d\Psi}{\int_{\rn^\tau}\prod_{j=0}^{\tau-1}e^{i\Psi_j^2}\gf_\Delta(z)\l(\lambda_j\Psi_j\r)d\Psi}\\
  =&\frac{1}{\tau}\sum_{k=0}^{\tau-1}\frac{\int_{\rn}e^{i\Psi_k^2}\gf_\Delta(z)\l(\lambda_k\Psi_k\r)(\lambda_k\Psi_k)^{\alpha}d\Psi_k}{\int_{\rn}e^{i\Psi_k^2}\gf_\Delta(z)\l(\lambda_k\Psi_k\r)d\Psi_k}
\end{align*}
where
\begin{align*}
  \lambda_j=\l(\frac{am\omega^2}{2}+\frac{m}{a}\l(1-\cos\l(\frac{2\pi j}{\tau}\r)\r)\r)^{-\frac12}.
\end{align*}
If we choose $\gf_\Delta(z)\l(\lambda_k\Psi_k\r)=\gf_\Delta(z)\l(\lambda_k\abs{\Psi_k}\r)$, we further obtain
\begin{align*}
  \langle\Omega\rangle(z)=&\frac{1}{\tau}\sum_{k=0}^{\tau-1}\lambda_k^\alpha\frac{\int_{\rn_{>0}}e^{i\psi}\gf_\Delta(z)\l(\lambda_k\sqrt{\Psi_k}\r)\Psi_k^{\frac{\alpha-1}{2}}d\Psi_k}{\int_{\rn_{>0}}e^{i\psi}\gf_\Delta(z)\l(\lambda_k\sqrt{\Psi_k}\r)\Psi_k^{-\frac{1}{2}}d\Psi_k}.
\end{align*}
Again, we have the choice between a distributionally regularizing gauge or an absolutely integrable gauge. In this subsection, we will proceed with a distributionally regularizing gauge $\gf_\Delta(z)(\psi)=\abs\psi^z$ which implies
\begin{align*}
  \langle\Omega\rangle(z)=&\frac{1}{\tau}\sum_{k=0}^{\tau-1}\lambda_k^{\alpha}\frac{\int_{\rn_{>0}}e^{i\psi}\psi^{\frac{\alpha+z-1}{2}}d\psi}{\int_{\rn_{>0}}e^{i\psi}\psi^{\frac{z-1}{2}}d\psi}
  =\frac{i^{\frac{\alpha}{2}}\Gamma\l(\frac{\alpha+z+1}{2}\r)}{\Gamma\l(\frac{z+1}{2}\r)\tau}\sum_{k=0}^{\tau-1}\lambda_k^{\alpha}
\end{align*}
and hence the required
\begin{align*}
  \langle\Phi^2\rangle(0)=&\lim_{z\to0}\frac{i\Gamma\l(\frac{z+1}{2}+1\r)}{\Gamma\l(\frac{z+1}{2}\r)\tau}\sum_{k=0}^{\tau-1}\lambda_k^2
  =\frac{i}{2\tau}\sum_{k=0}^{\tau-1}\lambda_k^2.
\end{align*}

\subsection{The local absolutely integrable gauge}\label{sec:harmonic-local-absint}
The fourth and final choice of gauge would be an absolutely integrable gauge for the transfer matrix. For example, we may consider
\begin{align*}
  \gf_\Delta(z)(\psi)=\begin{cases}1&,\ \abs\psi<1\\\abs\psi^z&,\ \abs\psi\ge1\end{cases}.
\end{align*}
In this case, we obtain
\begin{align*}
  \langle\Omega\rangle(z)=&\frac{1}{\tau}\sum_{k=0}^{\tau-1}\lambda_k^\alpha\frac{\int_{\rn_{>0}}e^{i\psi}\gf_\Delta(z)\l(\lambda_k\sqrt\psi\r)\psi^{\frac{\alpha-1}{2}}d\psi}{\int_{\rn_{>0}}e^{i\psi}\gf_\Delta(z)\l(\lambda_k\sqrt\psi\r)\psi^{-\frac{1}{2}}d\psi}\\
  =&\frac{1}{\tau}\sum_{k=0}^{\tau-1}\lambda_k^\alpha\frac{\int_0^{\lambda_k^{-1}}e^{i\psi}\psi^{\frac{\alpha-1}{2}}d\psi+\lambda_k^z\int_{\rn_{>{\lambda_k^{-1}}}}e^{i\psi}\psi^{\frac{\alpha+z-1}{2}}d\psi}{\int_0^{\lambda_k^{-1}}e^{i\psi}\psi^{-\frac{1}{2}}d\psi+\lambda_k^z\int_{\rn_{>{\lambda_k^{-1}}}}e^{i\psi}\psi^{\frac{z-1}{2}}d\psi}\\
  =&\frac{1}{\tau}\sum_{k=0}^{\tau-1}\lambda_k^\alpha\frac{i^{\frac{\alpha+1}{2}}\gamma\l(\frac{\alpha+1}{2},\lambda_k^{-1}\r)+\lambda_k^zi^{\frac{\alpha+z+1}{2}}\Gamma\l(\frac{\alpha+z+1}{2},\lambda_k^{-1}\r)}{i^{\frac{1}{2}}\gamma\l(\frac{1}{2},\lambda_k^{-1}\r)+\lambda_k^zi^{\frac{z+1}{2}}\Gamma\l(\frac{z+1}{2},\lambda_k^{-1}\r)}
\end{align*}
which in the limit $z\to0$ yields
\begin{align*}
  \langle\Omega\rangle(0)=&\frac{1}{\tau}\sum_{k=0}^{\tau-1}\lambda_k^\alpha\frac{i^{\frac{\alpha}{2}}\gamma\l(\frac{\alpha+1}{2},\lambda_k^{-1}\r)+i^{\frac{\alpha}{2}}\Gamma\l(\frac{\alpha+1}{2},\lambda_k^{-1}\r)}{\gamma\l(\frac{1}{2},\lambda_k^{-1}\r)+\Gamma\l(\frac{1}{2},\lambda_k^{-1}\r)}
  =\frac{i^{\frac{\alpha}{2}}}{\tau}\sum_{k=0}^{\tau-1}\lambda_k^\alpha\frac{\Gamma\l(\frac{\alpha+1}{2}\r)}{\Gamma\l(\frac{1}{2}\r)}
\end{align*}
and thus
\begin{align*}
  \langle\Phi^2\rangle(0)=&\frac{i}{2\tau}\sum_{k=0}^{\tau-1}\lambda_k^2.
\end{align*}

\section{The harmonic oscillator numerically}\label{sec:harmonic_oscillator_numeric}
While the harmonic oscillator can be solved analytically, as seen in \autoref{sec:harmonic_oscillator_analytic}, more complicated lattice theories will need to be solved numerically. To this end, we may choose between a quantum simulation of the lattice theory~\cite{hartung-jansen,jansen-hartung} or a classical simulation. In this section, we will therefore consider the numerical approach to the $\zeta$-regularized harmonic oscillator using classical computation as an example for $\zeta$-regularized lattice theories in detail.

If we want to treat the $\zeta$-regularized harmonic oscillator numerically, we have in principle two options. The first option is to choose a gauge $\gf$ and compute the dependence on the gauge parameter $z$ explicitly. In this case, we may choose a distributionally regularizing gauge $\gf(z)(r,\Phi)=r^z\gf_\d(z)(\Phi)$ and obtain an expression of the form
\begin{align*}
  \langle\Omega\rangle(z)=&\frac{i^{\frac{\alpha}{2}}\Gamma\l(\frac{\tau+z+\alpha}{2}\r)\int_{\d B_{\rn^\tau}}\gf_\d(z)\l(\sqrt\Sigma^{-1}\Psi\r)\Omega\l(\sqrt\Sigma^{-1}\Psi\r)d\vol_{\d B_{\rn^\tau}}(\Psi)}{\Gamma\l(\frac{\tau+z}{2}\r)\int_{\d B_{\rn^\tau}}\gf_\d(z)\l(\sqrt\Sigma^{-1}\Psi\r)d\vol_{\d B_{\rn^\tau}}(\Psi)}.
\end{align*}
This enables us to take the $z\to0$ limit explicitly and we are left with the numerical problem of integrating
\begin{align*}
  E_0 =&\frac{m\omega^2\tau}{2}\frac{\int_{\d B_{\rn^\tau}}\Omega\l(\sqrt\Sigma^{-1}\Psi\r)d\vol_{\d B_{\rn^\tau}}(\Psi)}{\vol_{\d B_{\rn^\tau}}(\d B_{\rn^\tau})}
\end{align*}
where $\Omega$ is the observable $\Phi^2$. Alternatively, if the analytic continuation cannot be computed explicitly, then we can use the fact that $E_0(z)=-im\omega^2\langle\Omega\rangle(z)$ is comprised of integrals that are well-posed for $\Re(z)\ll0$. Hence, $E_0(z)$ can be computed for a sufficiently large number of values $z$ with $\Re(z)\ll0$. The $z=0$ value can then be obtained by fitting the $z$-dependence (which is known up to some parameters) to the thus generated data and extrapolating the fit to $z=0$. In either case, we need to evaluate high-dimensional spherical integrals. 

\subsection{Solving the spherical integrals}
The numerical difficulty of computing
\begin{align*}
  E_0=&\frac{m\omega^2\tau}{2}\frac{\int_{\d B_{\rn^\tau}}\Omega\l(\sqrt\Sigma^{-1}\Psi\r)d\vol_{\d B_{\rn^\tau}}(\Psi)}{\vol_{\d B_{\rn^\tau}}(\d B_{\rn^\tau})}
\end{align*}
is the quadrature of the high-dimensional spherical integrals. We are using ``Sobol' points on the sphere $\d B_{\rn^\tau}$'' which can be generated from a ``standard'' Sobol' sequence~\cite{sobol} in three steps.
\begin{enumerate}
\item[Step 1.] Generate a $\tau$-dimensional Sobol' sequence (uniform distribution on $(0,1)^\tau$).
\item[Step 2.] Apply the standard normal ($\Np(0,1)$) percent point function $\mathrm{ppf}$ in each coordinate (standard normal distribution on $\rn^\tau$).
\item[Step 3.] Normalize each sample (uniform distribution on $\d B_{\rn^\tau}$).
\end{enumerate}
In order to make full use of the properties of the Sobol' sequence, we generate $2^n$ Sobol' points for some $n\in\nn$ and skip the first point since the first (non-randomized) Sobol' point is always $\l(\frac12,\ldots,\frac12\r)^T$ which gets mapped to the origin under $\mathrm{ppf}$ and thus cannot be normalized.

Let $\Sp$ be the set of generated Sobol' points. Then we can approximate $E_0$ via
\begin{align*}
  E_0\approx\frac{m\omega^2\tau}{2}\frac{1}{\#\Sp}\sum_{\Psi\in\Sp}\Omega\l(\sqrt\Sigma^{-1}\Psi\r).
\end{align*}
The convergence in terms of the number of (randomized) Sobol' samples is shown in \autoref{fig:sobol_error}. The standard deviations, which were extracted from $100$ simulations each, indicate an error scaling proportional to $\frac{1}{\sqrt{\#\Sp}}$. In other words, the $\zeta$-regularized lattice simulation of the harmonic oscillator with Lorentzian background is comparable to a Markov Chain Monte-Carlo simulation in Euclidean background in terms of numerical effort. 

\begin{figure}
  \includegraphics[width=\textwidth]{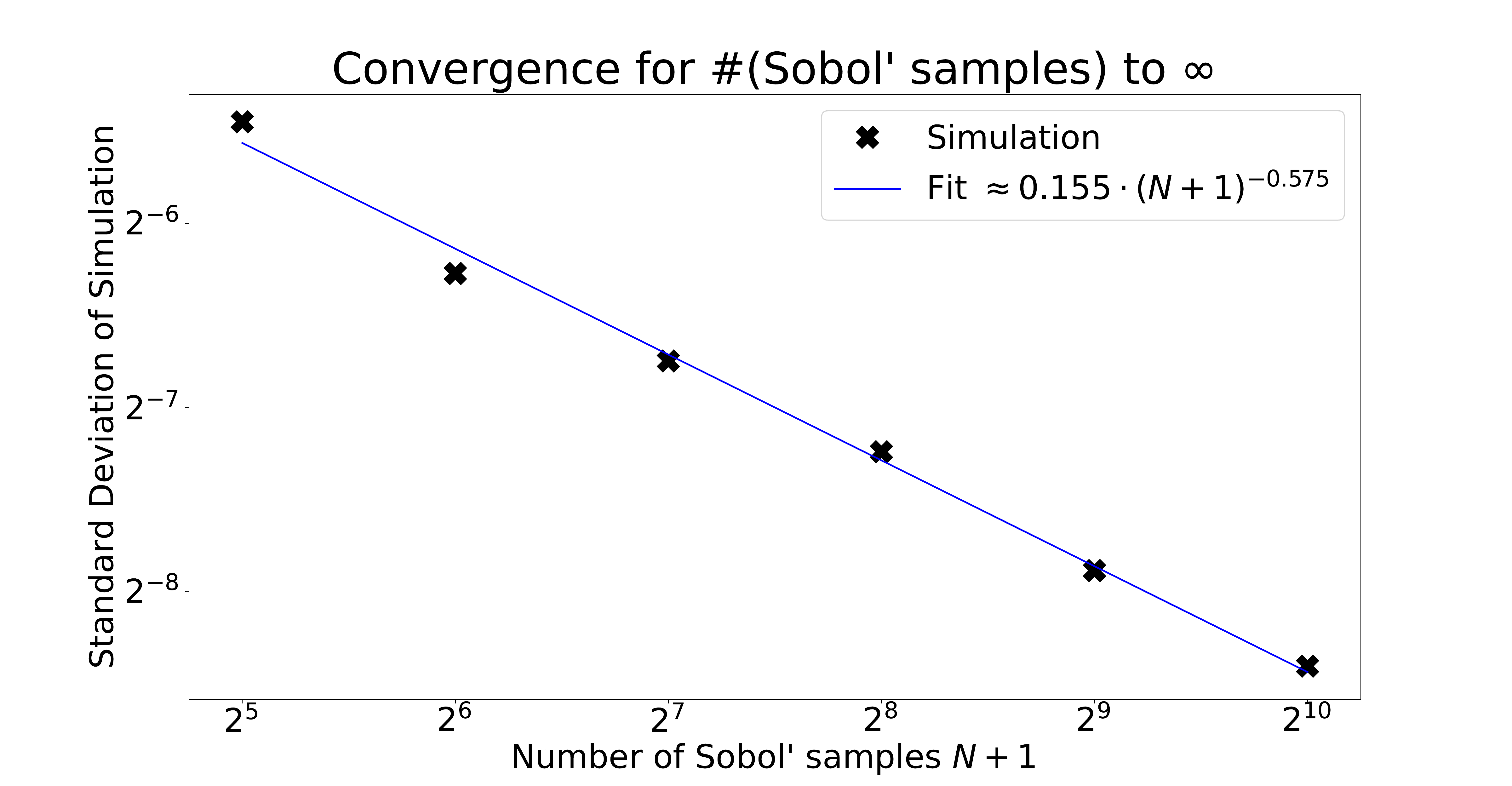}
  \caption{\label{fig:sobol_error}This figure shows the standard deviation of $\frac{1}{\#\Sp}\sum_{\Psi\in\Sp}\Omega\l(\sqrt\Sigma^{-1}\Psi\r)$ using $N=\#\Sp=2^n-1$ sample points. The standard deviation is estimated using $100$ simulations with randomized Sobol' points. Furthermore, a least square error fit $C (N+1)^\alpha$ to the simulation data is shown. The exponent $\alpha\approx-0.575$ indicates an error scaling $\mathrm{error}\propto\frac{1}{\sqrt{\#\Sp}}$. The simulation parameters are $m=\omega=\hbar=1$, $\tau=128$, and $a=0.01$.}
\end{figure}

Having a reasonable quadrature to estimate 
\begin{align*}
  E_0\approx\frac{m\omega^2\tau}{2}\frac{1}{\#\Sp}\sum_{\Psi\in\Sp}\Omega\l(\sqrt\Sigma^{-1}\Psi\r),
\end{align*}
we can now consider the behavior of $E_0(m,\omega,\hbar)$ as a function of the lattice spacing $a$ and the number of lattice points $\tau$. The physical volume is then given by $T=a\tau$. In order to extract the continuum limit, we are interested in the limit $a\searrow0$ at constant physical volume $T$ and the limit $T\nearrow\infty$. \autoref{fig:continuum_limit} (top) shows simulation results of $E_0(m=1,\omega=1,\hbar=1)$ with lattice spacing $a\in\{2^{-2},2^{-3},2^{-4},2^{-5}\}$ at physical volume $T=32$. Similarly, \autoref{fig:continuum_limit} (bottom) shows simulation results of \mbox{$E_0(m=1,\omega=1,\hbar=1)$} with lattice spacing $a=2^{-5}$ at physical volume $T\in\{4,8,16,32\}$. The error bars are the standard deviations extracted from $100$ simulations using $2^{10}-1$ randomized Sobol' points and the horizontal lines are the analytic continuum limit $E_0(m=1,\omega=1,\hbar=1)=\frac12$. In particular, we note that the results for $a=2^{-5}$ and $T\in\{16,32\}$ are already comparable to the continuum limit at the given level of accuracy. 

\begin{figure}[ht!]
  \begin{tabular}{c}
    \includegraphics[width=\textwidth]{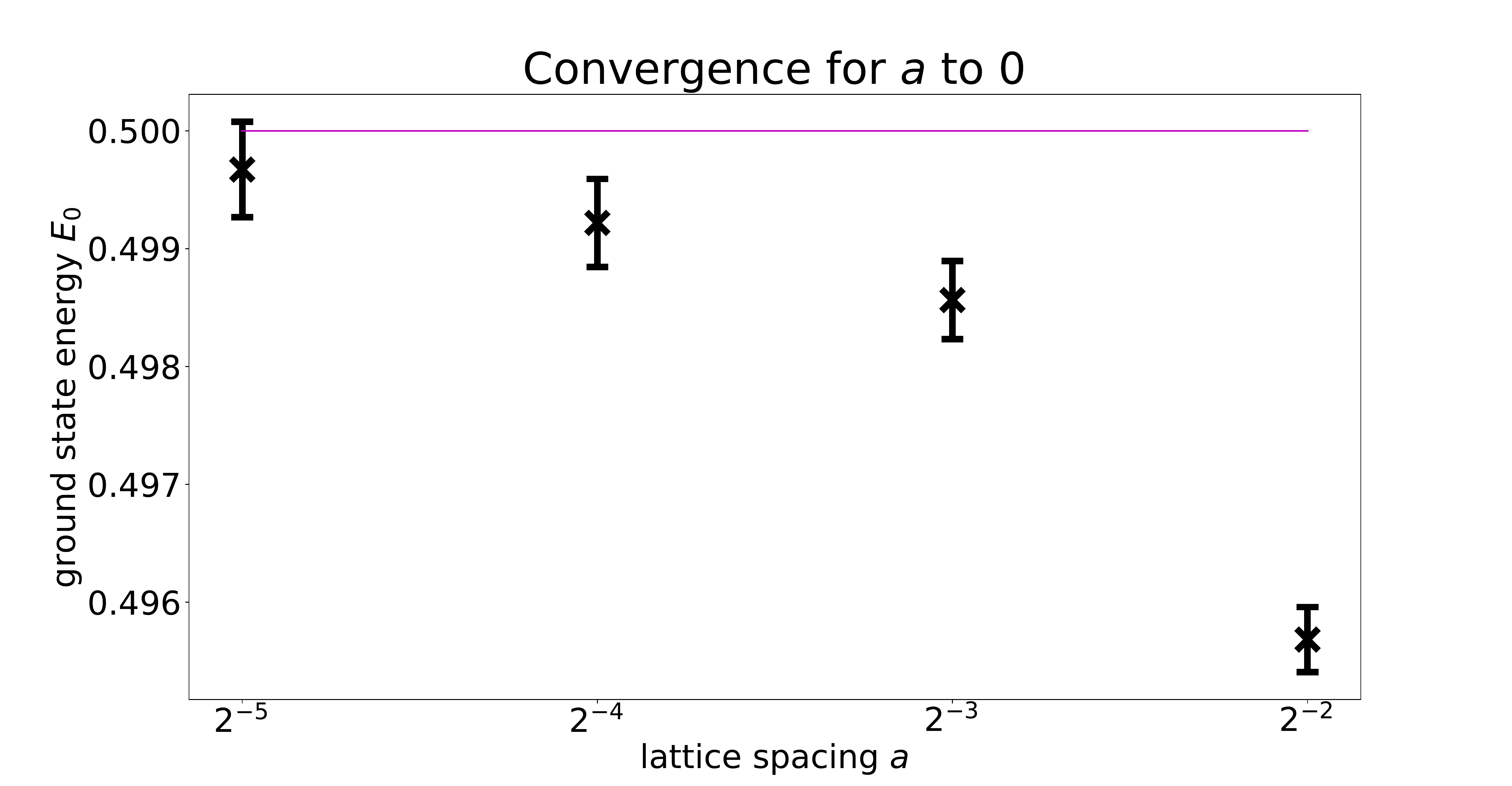}\\
    \includegraphics[width=\textwidth]{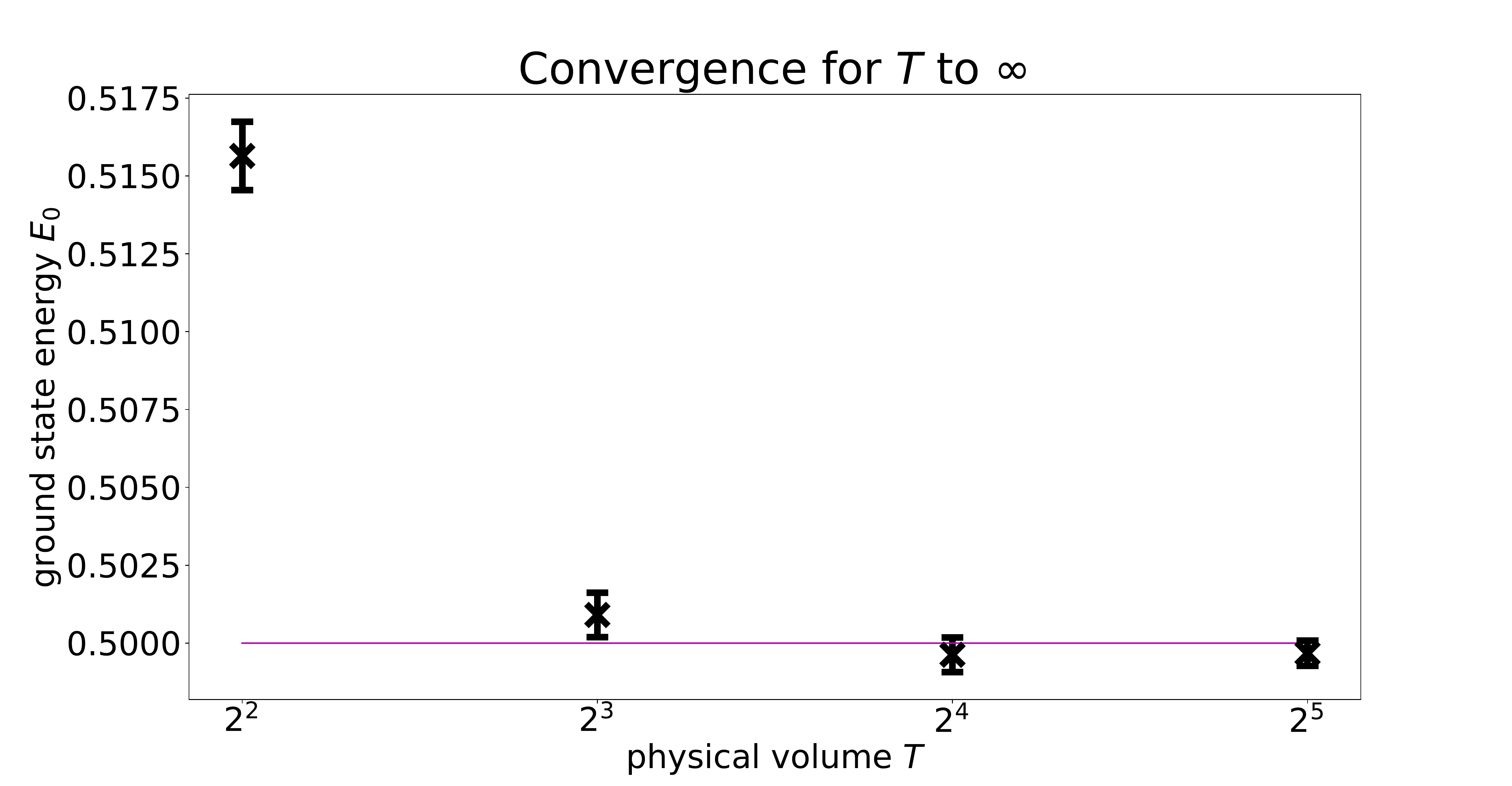}
  \end{tabular}
  \caption{\label{fig:continuum_limit}This figure shows physical simulations indicating the lattice spacing $a\to0$ limit at constant physical volume $T=32$ (top) and the physical volume $T\to\infty$ limit (bottom) at constant lattice spacing $a=2^{-5}$. The horizontal lines are the continuum ground state energy $E_0=\frac{\hbar\omega}{2}$. In the simulations, $m=\omega=\hbar=1$ was used and error bars denote standard deviations extracted from $100$ independent simulations. Each simulation uses $2^{10}-1$ randomized Sobol' points. It should be noted that the vertical axis range is within $1\%$ of $E_0$ at the top and within $4\%$ of $E_0$ at the bottom.}
\end{figure}

\subsection{Extrapolation from $\Re(z)\ll0$} In more complicated situations, it may not always be possible to compute the dependence on the gauge parameter $z$ explicitly, or computing the spherical integrals may be more difficult than computing the $\rn^\tau$ integrals 
\begin{align*}
  \langle\Omega\rangle(z)=\frac{\int_{\rn^\tau}e^{iS(\Phi)}\gf(z)(\Phi)\Omega(\Phi)d\Phi}{\int_{\rn^\tau}e^{iS(\Phi)}\gf(z)(\Phi)d\Phi}.
\end{align*}
In such cases, it may be advantageous to compute $\langle\Omega\rangle(z)$ at $\Re(z)\ll0$ where the integrals are well-defined and numerically well-posed. In order to guide the extrapolation to $z=0$, we need to account for the unknown $z$-dependence $\Delta(z)$. However, if the gauge is chosen well, then we can use the fact that the $z$-dependence comes from analytic continuations of Laplace transforms of poly-($\log$)-homogeneous distributions, for which the general form of these Laplace transforms in terms of the variable~$z$ are known. Thus, by combining those constituent terms, we obtain a parametric expression $\Delta(z,p)$ such that $\Delta(z,p^*)=\langle\Omega\rangle(z)$ is satisfied for some set of parameters~$p^*$. Once this general dependence $\Delta(z,p)$ is identified by analyzing the poly-($\log$)-homogeneity structure of the integrands, the parameters $p^*$ can be obtained from fitting $\Delta(z,p)$ against the numerical data $\langle \Omega\rangle(z)$. Finally we obtain $\langle \Omega\rangle(0)$ by extrapolation, i.e., by evaluating $\Delta(z,p^*)$ at $z=0$.

\begin{example*}
  Consider an action $S$ for which there exists a closed compact manifold~$M$ in $\rn^\tau$ such that $\fa \Phi\in\rn^\tau\ \ex! \phi\in M\ \ex! r\in\rn_{>0}:\ \Phi=r\phi$ and $S(\Phi)=c r^\sigma$ for some constant $c\in\rn$. Furthermore, let $\Omega(\Phi)=\sum_{\iota\in I}r^{\delta_\iota}\omega_\iota(\phi)$ be a polyhomogeneous observable and choose the gauge $\gf(z)(\Phi)=(1+r)^z$. Then,
  \begin{align*}
    \langle\Omega\rangle(z)=&\frac{\int_{\rn^\tau}e^{iS(\Phi)}\gf(z)(\Phi)\Omega(\Phi)d\Phi}{\int_{\rn^\tau}e^{iS(\Phi)}\gf(z)(\Phi)d\Phi}\\
    =&\frac{\sum_{\iota\in I}\int_{\rn_{>0}}e^{icr^\sigma}(1+r)^zr^{\delta_\iota+\tau-1}dr\int_M\omega_\iota(\phi)d\vol_M(\phi)}{\int_{\rn_{>0}}e^{icr^\sigma}(1+r)^zr^{\tau-1}dr\vol_M(M)}.
  \end{align*}
  Thus,
  \begin{align*}
    I_\delta(z):=&\int_{\rn_{>0}}e^{icr^\sigma}(1+r)^zr^\delta dr\\
    =&\int_0^1e^{icr^\sigma}(1+r)^zr^\delta dr+\int_{\rn_{>1}}e^{icr^\sigma}(1+r)^zr^\delta dr\\
    =&\sum_{k\in\nn_0}{z\choose k}\l(\int_0^1e^{icr^\sigma}r^{\delta+k} dr+\int_{\rn_{>1}}e^{icr^\sigma}r^{\delta+z-k} dr\r)\\
    =&\sum_{k\in\nn_0}{z\choose k}\frac{1}{\sigma}\l(\int_0^1e^{icr}r^{\frac{\delta+k+1-\sigma}{\sigma}} dr+\int_{\rn_{>1}}e^{icr}r^{\frac{\delta+z-k+1-\sigma}{\sigma}} dr\r)\\
    =&\sum_{k\in\nn_0}{z\choose k}\frac{1}{\sigma}\l(\frac{\gamma\l(\frac{\delta+k+1}{\sigma},-ic\r)}{(-ic)^{\frac{\delta+k+1}{\sigma}}}+\frac{\Gamma\l(\frac{\delta+z-k+1}{\sigma},-ic\r)}{(-ic)^{\frac{\delta+z-k+1}{\sigma}}}\r)
  \end{align*}
  implies
  \begin{align*}\tag{$***$}\label{eq:fit_Re(z)<<0}
    \langle\Omega\rangle(z)=&\sum_{\iota\in I}C_{\iota}\frac{\sum_{k\in\nn_0}{z\choose k}\frac{1}{\sigma}\l(\frac{\gamma\l(\frac{\delta_\iota+\tau+k}{\sigma},-ic\r)}{(-ic)^{\frac{\delta_\iota+\tau+k}{\sigma}}}+\frac{\Gamma\l(\frac{\delta_\iota+\tau+z-k}{\sigma},-ic\r)}{(-ic)^{\frac{\delta_\iota+\tau+z-k}{\sigma}}}\r)}{\sum_{k\in\nn_0}{z\choose k}\frac{1}{\sigma}\l(\frac{\gamma\l(\frac{\tau+k}{\sigma},-ic\r)}{(-ic)^{\frac{\tau+k}{\sigma}}}+\frac{\Gamma\l(\frac{\tau+z-k}{\sigma},-ic\r)}{(-ic)^{\frac{\tau+z-k}{\sigma}}}\r)}.
  \end{align*}
  In this expression, everything is known a priori with the exception of the $C_\iota$. In other words, we can fit \eref{eq:fit_Re(z)<<0} against simulation data to extract the $C_\iota$ and obtain
  \begin{align*}
    \langle\Omega\rangle(0)=&\sum_{\iota\in I}C_{\iota}\frac{\Gamma\l(\frac{\delta_\iota+\tau}{\sigma}\r)}{(-ic)^{\frac{\delta_\iota}{\sigma}}\Gamma\l(\frac{\tau}{\sigma}\r)}.
  \end{align*}
\end{example*}
\begin{figure}
  \begin{tabular}{c}
    \includegraphics[height=.25\textheight]{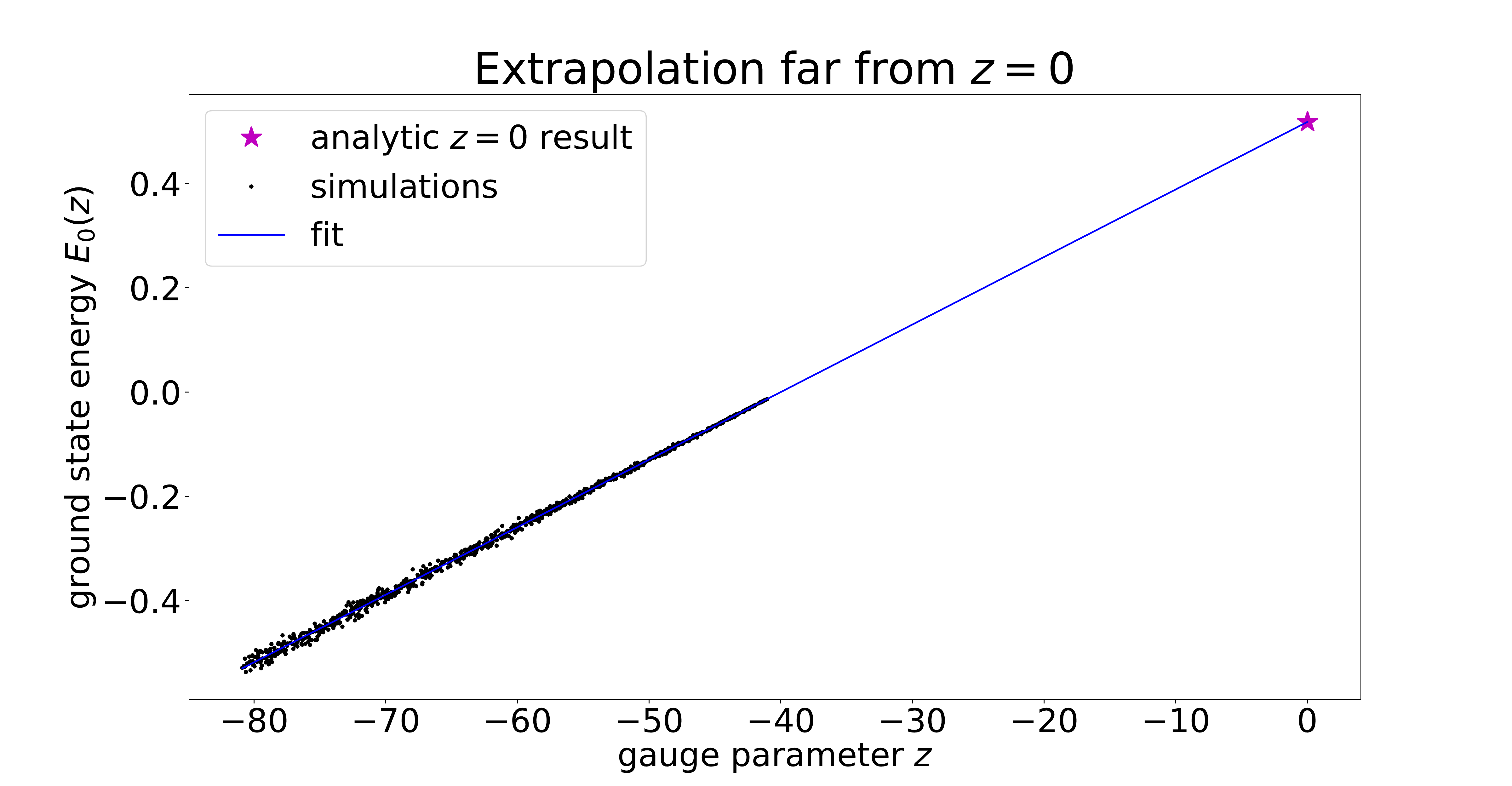}\\
    \includegraphics[height=.25\textheight]{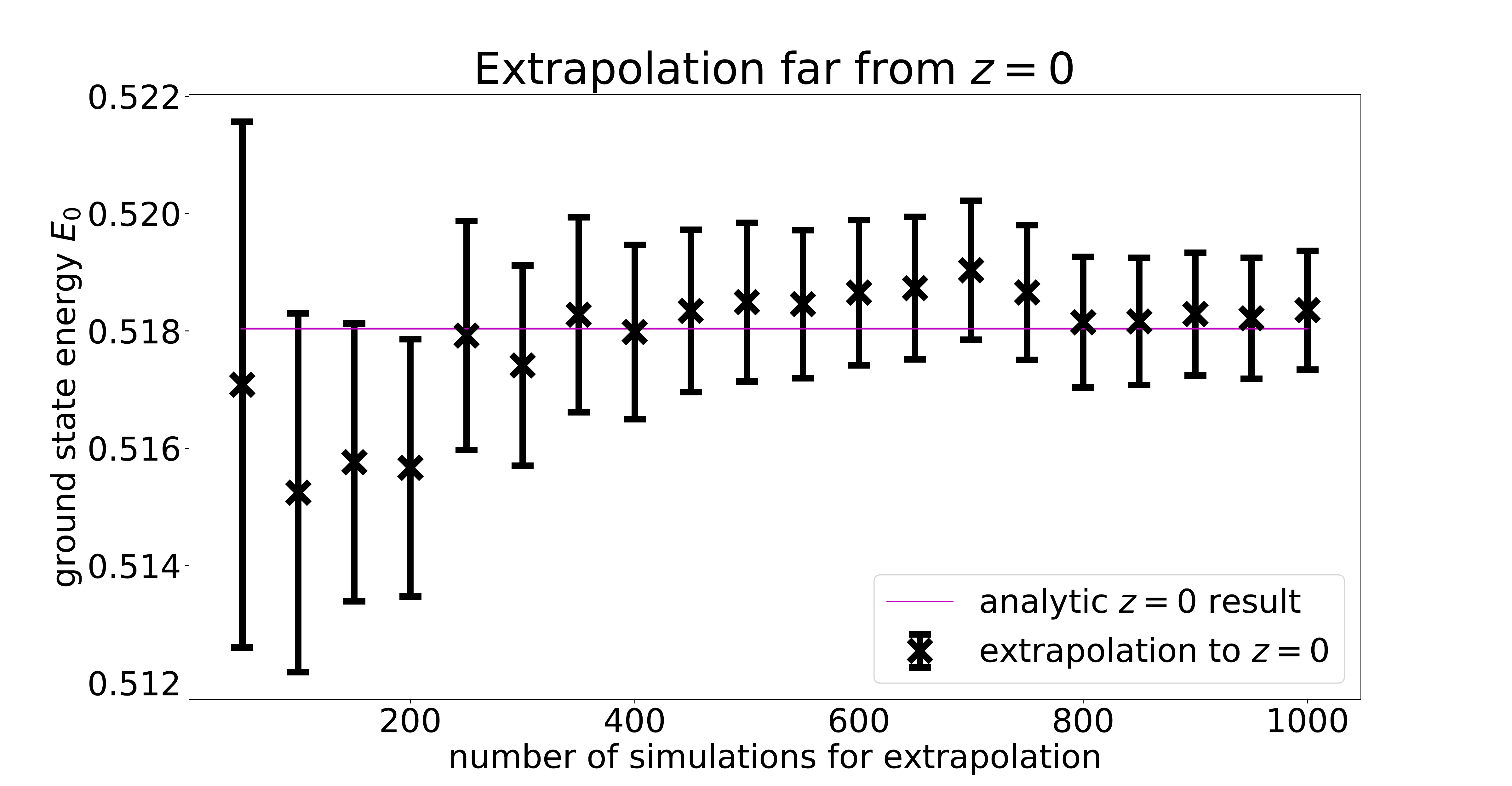}\\
    \includegraphics[height=.25\textheight]{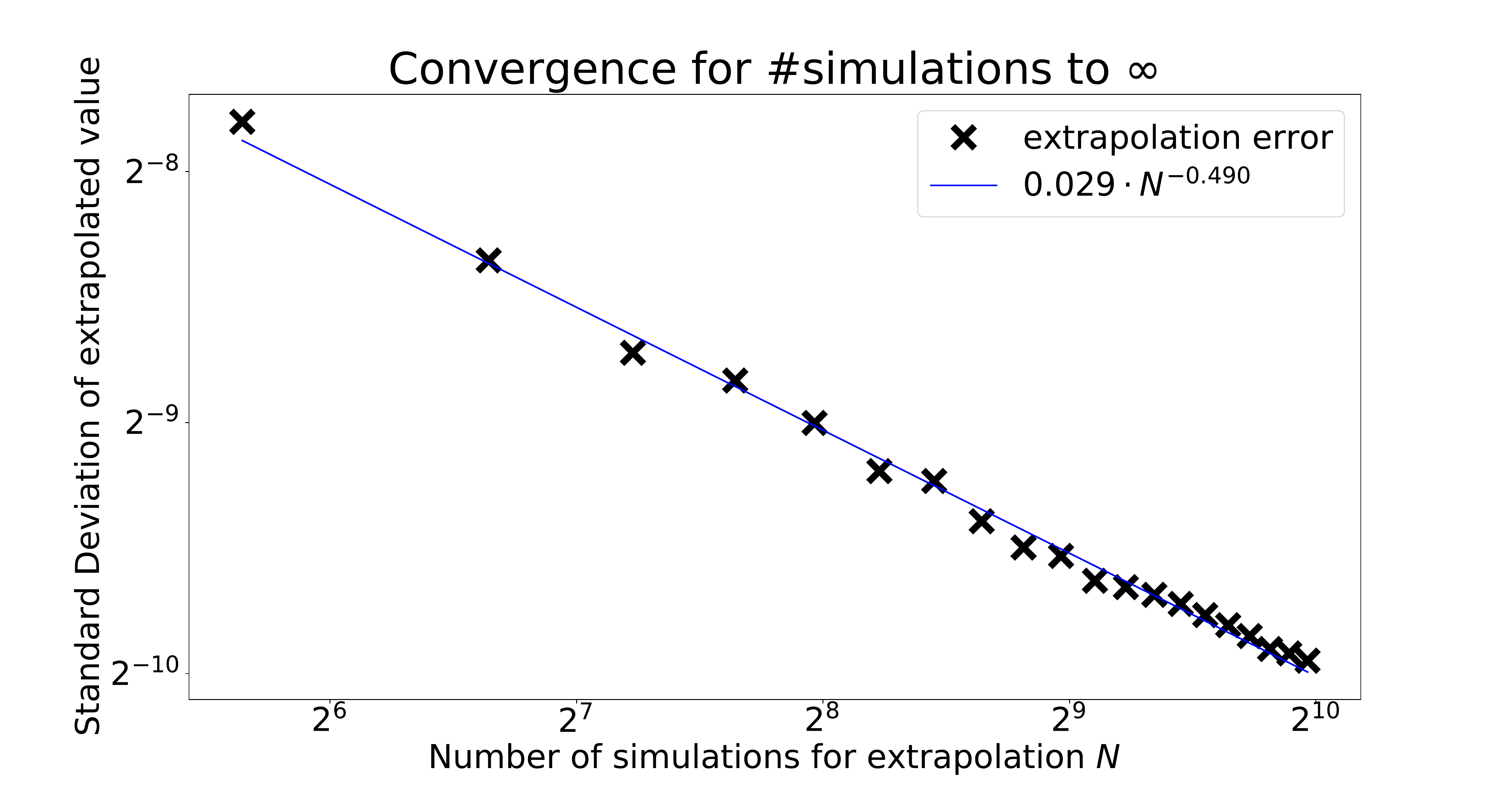}
  \end{tabular}
  \caption{\label{fig:extrapolation} This figure shows the continuum extrapolation of $1000$ simulations with randomly chosen $z\in[-2\tau-1,-\tau-1]$, $\tau=40$, $a=0.1$, and $m=\omega=\hbar=1$. The top plot shows the simulation data, the fit, and the $z=0$ value $E_0(a,\tau)$. The middle plot shows the extrapolated $z=0$ values with standard deviation as error using $50,100,\ldots,950,1000$ simulations and compares these to the $z=0$ value $E_0(a,\tau)$. The bottom plot shows the standard deviations of the extrapolated $z=0$ values as well as a least square error fit $C N^\alpha$ where $N$ is the number of simulations used for the $z=0$ extrapolation. The exponent $\alpha\approx-0.490$ indicates an error scaling $\mathrm{error}\propto\frac{1}{\sqrt{\#\mathrm{simulations}}}$.}
\end{figure}

With the choice of gauge $g(z)(r,\phi)=r^z$, the $z$-dependence $\Delta(z,p)$ is much simpler; namely affine linear $\Delta(z,p)=p_0+p_1z$. Using up to $1000$ simulations at randomly chosen $z\in[-2\tau-1,-\tau-1]$ with $\tau=40$, $a=0.1$, and $m=\omega=\hbar=1$, we can fit $\Delta(z,p)$ and obtain the $z=0$ extrapolation $p_0$. \autoref{fig:extrapolation} (top) shows the least square error fit of $\Delta(z,p)$ to the data of $1000$ simulations as well as the $z=0$ value $E_0(a,\tau)$. In \autoref{fig:extrapolation} (middle) the extrapolated value $p_0$ extracted from $50,100,\ldots,950,1000$ simulations including the standard deviation as error is shown. The comparison to the $z=0$ value $E_0(a,\tau)$, which we can compute analytically as in \autoref{sec:harmonic_oscillator_analytic}, shows compatibility of the extrapolated value $p_0$ with $E_0(a,\tau)$ already for a relatively small number of simulations. In \autoref{fig:extrapolation} (bottom), we compare the standard deviation of $p_0$ extracted from $50,100,\ldots,950,1000$ simulations to a least square error fit of the form $C\cdot(\#\text{simulations})^\alpha$. The fitted value of $\alpha\approx-0.490$ indicates an error scaling proportional to $\frac{1}{\sqrt{\#\text{simulations}}}$.

\section{Conclusion}
In this article, we have studied the $\zeta$-regularization of lattice field theories with Lorentzian background metrics. In particular, we have discussed the connection between lattice field theories with Euclidean backgrounds and Lorentzian backgrounds in \autoref{sec:zeta-reg-lattice}, in order to understand how the $\zeta$-regularization arises as a na\"ive means of regularizing lattice expectation values when it is not possible to analytically continue the theory to Euclidean time. In \autoref{sec:gauging-transfer-matrix}, we have then made the connection from the na\"ively $\zeta$-regularized lattice theory to the Hamiltonian formulation so far discussed in the literature~\cite{hartung-iwota,hartung-jmp,hartung-jansen,hartung-jansen-gauge-fields,jansen-hartung}. This allowed us to gain insights into the two main questions that need considering when extending the $\zeta$-regularization in the Hamiltonian formulation to the Lagrangian formulation commonly used to describe lattice field theories.
\begin{enumerate}
\item[Q1:] What kind of gauge families $\gf$ can be used to $\zeta$-regularize lattice field theories in a Lorentzian background?
\item[Q2:] Under what conditions is it possible to construct a $\zeta$-regularized Hamiltonian theory starting from a ``na\"ively'' $\zeta$-regularized lattice field theory? 
\end{enumerate}
Question Q1 is at least partially answered by \autoref{thm:compactification_limit} and \autoref{thm:analytic continuation}. In order to $\zeta$-regularize a lattice field theory in a Lorentzian background, $\gf(z)$ must be chosen in such a way, that the corresponding integrals are well-defined for $\Re(z)\ll0$. This is generally satisfied if $\gf(z)$ has an asymptotic behavior of $\abs{\gf(z)(\Phi)}\le c(1+\norm\Phi^2)^{\delta\Re(z)}$ or exhibits even faster convergence to zero as $\norm\Phi\to\infty$. Furthermore, for an increasing, exhaustive sequence of compacta $(K_n)_{n\in\nn}$, the compactified vacuum expectation values $\l(\frac{\langle 1_{K_n^T}\Omega\gf\rangle_L}{\langle 1_{K_n^T}\gf\rangle_L}\r)_{n\in\nn}$ need to be locally bounded in $C(D)$ where $D\sse\cn$ is open, connected, and contains $0$ as well as a half-space with $\Re(z)\ll0$. This local boundedness condition is likely to be the most intricate assumption to satisfy, as this is the likely point at which renormalization and modeling decisions can have a major impact.

It is important to note that the answer to Q1 merely implies the
existence of a $\zeta$-regularized lattice theory. Without further information, there is no reason to assume that this $\zeta$-regularized lattice theory is at all physical. In order to conclude physicality of the expectation values extracted from the $\zeta$-regularized lattice theory, we need to connect the Lagrangian formulation of the $\zeta$-regularized lattice theory to the Hamiltonian formulation studied to date~\cite{hartung-iwota,hartung-jmp,hartung-jansen,hartung-jansen-gauge-fields,jansen-hartung} which includes a proof of physicality. As such, answering Q2 allows us to make physical sense of the $\zeta$-regularized lattice vacuum expectation values.

Question Q2 is partially answered by \autoref{thm-gauged-transfer-matrix} and \autoref{thm-gauged-time-evolution}. In essence, we need to ensure that the resulting gauged transfer matrix or gauged time evolution operator is a suitable family of Fourier integral operators; namely a gauged family in the sense of~\cite{hartung-iwota,hartung-jmp,hartung-jansen,hartung-jansen-gauge-fields,jansen-hartung}. This can be ensured by choosing $\gf(z)$ or $\gf_\Delta(z)$ to be in a H\"ormander class $S^{\delta\Re(z)}$ or (poly-)homogeneous. While this restricts the class of gauge families $\gf$ sufficient to construct the $\zeta$-regularized lattice theory as per question Q1, computationally this is not a bad class of functions as many analytic results are known and, thus, allowing for computational simplifications in numerical simulations of $\zeta$-regularized lattice theories.

In \autoref{sec:classical-limit}, we then considered the classical limit of $\zeta$-regularized lattice theories and observed the remarkable fact that the classical limit is independent of the value for $z$. If we consider the impact of the gauge on a classical action, this can be rationalized since the gauge adds an imaginary component to a classical action. Thus, if we expect for only the real-part of the gauged action to be retained in the classical limit, then \autoref{thm:classical-limit} confirms this point of view.

Finally, we illustrated the theory of $\zeta$-regularized lattice theories developed in this article at the example of the harmonic oscillator. First, we treated it analytically in \autoref{sec:harmonic_oscillator_analytic}; second, we considered it numerically in \autoref{sec:harmonic_oscillator_numeric}. Both the analytic and numeric studies of the harmonic oscillator focus on the ground state energy.

In \autoref{sec:harmonic_oscillator_analytic}, we computed the ground state energy of the harmonic oscillator on the lattice five times. Initially, we computed the ground state energy by using a Wick rotation, i.e., using the canonical approach of analytic continuation to imaginary (Euclidean) time. This served as a comparison for the four choices of gauge families that follow, as well as to show that the lattice theory has the correct continuum limit. The four gauge families that were used to compute the $\zeta$-regularized harmonic oscillator on the lattice with Minkowski background highlight the four cases arising from two binary choices we can make. On one hand, we have the choice of gauging local in time vs. global in time, that is, gauging the transfer matrix or the entire time evolution operator. On the other hand, we can choose between absolutely integrable gauge families and distributionally regularizing gauge families, where the former is more directly applicable to numerical simulations but the latter allows for analytic shortcuts via the Laplace transform. Of course, all four gauge families reproduced the correct ground state energy, as is to be expected since we have gauge-independence.

The numerical study of the $\zeta$-regularized harmonic oscillator on the lattice with Minkowski background in \autoref{sec:harmonic_oscillator_numeric} focused on the numerical obstacles arising in classical simulations of $\zeta$-regularized lattice theories as opposed to quantum simulations which are a special case of the quantum simulations discussed in~\cite{hartung-jansen,jansen-hartung}. In particular, we noted that distributionally regularizing gauge families allow for numerical shortcuts via Laplace transform and can make for simpler functional dependence on the complex parameter $z$, which needs to be fitted against simulation data for extrapolation if the extrapolation cannot be done analytically to begin with. However, distributionally regularizing gauge families require numerical quadratures over high-dimensional manifolds (a sphere for the harmonic oscillator). Absolutely integrable gauge families, on the other hand, ``only'' require quadratures in high-dimensional vector spaces $\rn^\tau$. As more powerful integration techniques exist for integrating over $\rn^\tau$ than codimension~$1$ submanifolds of $\rn^\tau$, the absolutely integrable gauge families might be preferable from a numerical point of view. This question may be exacerbated further when considering gauge fields in lattice field theories. Unfortunately, the cost of choosing absolutely integrable gauge families is a more complicated dependence on the complex parameter $z$ which may make extrapolation to $z=0$ more difficult.

\begin{bibdiv}
  \begin{biblist}

    \bib{Aoki:2019cca}{article}{
      author = {Aoki, S.},
      author = {others},
      collaboration = {Flavour Lattice Averaging Group},
      title = {FLAG Review 2019: Flavour Lattice Averaging Group (FLAG)},
      eprint = {1902.08191},
      archivePrefix = {arXiv},
      primaryClass = {hep-lat},
      reportNumber = {FERMILAB-PUB-19-077-T},
      doi = {10.1140/epjc/s10052-019-7354-7},
      journal = {Eur. Phys. J. C},
      volume = {80},
      number = {2},
      pages = {113},
      year = {2020}
    }
    
    \bib{beneventano-santangelo}{article}{
      author={Beneventano, C. G.},
      author={Santangelo, E. M.},
      title={Effective action for QED$_4$ through $\zeta$-function regularization},
      journal={J. Math. Phys.},
      volume={42},
      pages={3260-3269},
      date={2001}
    }
    
    \bib{blau-visser-wipf}{article}{
      author={Blau, S. K.},
      author={Visser, M.},
      author={Wipf, A.},
      title={Analytic results for the effective action},
      journal={Int. J. Mod. Phys.},
      volume={A6},
      pages={5409-5433},
      date={1991}
    }

    \bib{bordag-elizalde-kirsten}{article}{
      author={Bordag, M.},
      author={Elizalde, E.},
      author={Kirsten, K.},
      title={Heat kernel coefficients of the Laplace operator on the D-dimensional ball},
      journal={J. Math. Phys.},
      volume={37},
      pages={895},
      date={1996}
    }

    \bib{bytsenko-et-al}{article}{
      author={Bytsenko, A. A.},
      author={Cognola, G.},
      author={Elizalde, E.},
      author={Morelli, V.},
      author={Zerbini, S.},
      title={Analytic Aspects of Quantum Fields},
      journal={World Scientific Publishing},
      date={2003}
    }

    \bib{Cichy:2018mum}{article}{
      Archiveprefix = {arXiv},
      Author = {Cichy, K.},
      Author = {Constantinou, M.},
      Date-Added = {2019-04-08 14:17:59 +0200},
      Date-Modified = {2019-04-11 17:03:13 +0200},
      Eprint = {1811.07248},
      Journal = {arXiv:1811.07248},
      Primaryclass = {hep-lat},
      Title = {{A guide to light-cone PDFs from Lattice QCD: an overview of approaches, techniques and results}},
      Year = {2018}
    }

    \bib{Creutz-Freedman}{article}{
      Author = {Creutz, M.},
      Author = {Freedman, B.},
      Title = {A Statistical Approach to Quantum Mechanics},
      Journal = {Annals of Physics},
      volume = {132},
      pages = {427-462},
      Year = {1981}
    }

    \bib{Constantinou:2015agp}{article}{
      Archiveprefix = {arXiv},
      Author = {Constantinou, M.},
      Booktitle = {{Proceedings, 8th International Workshop on Chiral Dynamics (CD15): Pisa, Italy, June 29-July 3, 2015}},
      Date-Added = {2019-04-08 14:17:59 +0200},
      Date-Modified = {2019-04-08 14:17:59 +0200},
      Doi = {10.22323/1.253.0009},
      Eprint = {1511.00214},
      Journal = {PoS},
      Pages = {009},
      Primaryclass = {hep-lat},
      Title = {{Recent progress in hadron structure from Lattice QCD}},
      Volume = {CD15},
      Year = {2015},
      Bdsk-Url-1 = {https://doi.org/10.22323/1.253.0009}
    }

    \bib{culumovic-et-al}{article}{
      author={Culumovic, L.},
      author={Leblanc, M.},
      author={Mann, R. B.},
      author={McKeon, D. G. C.},
      author={Sherry, T. N.},
      title={Operator regularization and multiloop Green's functions},
      journal={Phys. Rev. D},
      volume={41},
      pages={514},
      date={1990}
    }

    \bib{Ding:2015ona}{article}{
      Archiveprefix = {arXiv},
      Author = {Ding, H.-T.},
      Author = {Karsch, F.},
      Author = {Mukherjee, S.},
      Date-Added = {2019-04-08 14:17:59 +0200},
      Date-Modified = {2019-04-11 17:30:12 +0200},
      Doi = {10.1142/S0218301315300076},
      Eprint = {1504.05274},
      Journal = {Int. J. Mod. Phys.},
      Pages = {1530007},
      Primaryclass = {hep-lat},
      Title = {{Thermodynamics of strong-interaction matter from Lattice QCD}},
      Volume = {E24},
      Year = {2015},
      Bdsk-Url-1 = {https://doi.org/10.1142/S0218301315300076}
    }

    \bib{dowker-critchley}{article}{
      author={Dowker, J. S.},
      author={Critchley, R.},
      title={Effective Lagrangian and energy-momentum tensor in de Sitter space},
      journal={Phys. Rev. D},
      volume={13},
      pages={3224},
      date={1976}
    }

    \bib{Durr:2008zz}{article}{
      Archiveprefix = {arXiv},
      Author = {Durr, S.},
      Author = {others},
      Date-Added = {2019-04-08 14:17:59 +0200},
      Date-Modified = {2019-04-08 14:17:59 +0200},
      Doi = {10.1126/science.1163233},
      Eprint = {0906.3599},
      Journal = {Science},
      Pages = {1224-1227},
      Primaryclass = {hep-lat},
      Title = {{Ab-Initio Determination of Light Hadron Masses}},
      Volume = {322},
      Year = {2008},
      Bdsk-Url-1 = {https://doi.org/10.1126/science.1163233}
    }

    \bib{elizalde2001}{article}{
      author={Elizalde, E.},
      title={Explicit zeta functions for bosonic and fermionic fields on a non-commutative toroidal spacetime},
      journal={J. Phys. A},
      volume={34},
      pages={3025-3035},
      date={2001}
    }

    \bib{elizalde}{article}{
      author={Elizalde, E.},
      title={Ten Physical Applications of Spectral Zeta Functions},
      journal={Lecture Notes in Physics, vol 855, Springer},
      date={2012}
    }

    \bib{elizalde-et-al}{article}{
      author={Elizalde, E.},
      author={Odintsov, S. D.},
      author={Romeo, A.},
      author={Bytsenko, A. A.},
      author={Zerbini, S.},
      title={Zeta Regularization Techniques With Applications},
      journal={World Scientific Publishing},
      date={1994}
    }

    \bib{elizalde-vanzo-zerbini}{article}{
      author={Elizalde, E.},
      author={Vanzo, L.},
      author={Zerbini, S.},
      title={Zeta-Function Regularization, the Multiplicative Anomaly and the Wodzicki Residue},
      journal={Commun. Math. Phys.},
      volume={194},
      pages={613-630},
      date={1998}
    }

    \bib{fermi-pizzocchero}{article}{
      author={Fermi, D.},
      author={Pizzocchero, L.},
      title={Local Zeta Regularization And The Scalar Casimir Effect},
      journal={World Scientific Publishing},
      date={2017}
    }

    \bib{feynman}{article}{
      author = {Feynman, R. P.},
      title={Space-Time Approach to Non-Relativistic Quantum Mechanics},
      journal={Rev. Mod. Phys.},
      volume={20},
      pages={367-387},
      year={1948}
    }

    \bib{feynman-hibbs-styer}{book}{
      author = {Feynman, R. P.},
      author = {Hibbs, A. R.},
      author = {Styer, D. F.},
      title={Quantum Mechanics and Path Integrals},
      publisher={Dover Publications, Inc.},
      year={2005}
    }

    \bib{Gattringer:2010zz}{article}{
      Author = {Gattringer, C.},
      Author = {Lang, C. B.},
      Date-Added = {2019-04-08 14:17:59 +0200},
      Date-Modified = {2019-04-08 14:17:59 +0200},
      Doi = {10.1007/978-3-642-01850-3},
      Journal = {Lect. Notes Phys.},
      Pages = {1-343},
      Title = {{Quantum chromodynamics on the lattice}},
      Volume = {788},
      Year = {2010},
      Bdsk-Url-1 = {https://doi.org/10.1007/978-3-642-01850-3}
    }

    \bib{guillemin-lagrangian}{article}{
      author={Guillemin, V.},
      title={Gauged Lagrangian Distributions},
      journal={Advances in Mathematics},
      volume={102},
      pages={184-201},
      date={1993}
    }

    \bib{guillemin-residue-traces}{article}{
      author={Guillemin, V.},
      title={Residue Traces for Certain Algebras of Fourier Integral Operators},
      journal={Journal of Functional Analysis},
      volume={115},
      pages={391-417},
      date={1993}
    }

    \bib{guillemin-wave}{article}{
      author={Guillemin, V.},
      title={Wave-trace invariants},
      journal={Duke Mathematical Journal},
      volume={83},
      number={2},
      pages={287-352},
      date={1996}
    }

    \bib{hack-moretti}{article}{
      author ={Hack, T.-P.},
      author={Moretti, V.},
      title={On the stress-energy tensor of quantum fields in curved spacetimes-comparison of different regularization schemes and symmetry of the Hadamard/Seeley-DeWitt coefficients}
      journal={J. Phys. A: Math. Theor.},
      volume={45},
      date={2012},
      pages={374019}
    }

    \bib{hartung-phd}{book}{
      author={Hartung, T.},
      title={$\zeta$-functions of Fourier Integral Operators},
      publisher={Ph.D. thesis, King's College London},
      address={London},
      date={2015}
    }

    \bib{hartung-scott}{article}{
      author={Hartung, T.},
      author={Scott, S.},
      title={A generalized Kontsevich-Vishik trace for Fourier Integral Operators and the Laurent expansion of $\zeta$-functions},
      journal={arXiv:1510.07324v2~[math.AP]},
    }

    \bib{hartung-iwota}{article}{
      author={Hartung, T.},
      title={Feynman path integral regularization using Fourier Integral Operator $\zeta$-functions},
      journal={The Diversity and Beauty of Applied Operator Theory, Birkh{\"a}user},
      date={2018},
      pages={261-289}
    }

    \bib{hartung-jmp}{article}{
      author={Hartung, T.},
      title={Regularizing Feynman Path Integrals using the generalized Kontsevich-Vishik trace},
      journal={Journal of Mathematical Physics},
      volume={58},
      pages={123505},
      date={2017}
    }

    \bib{hartung-jansen}{article}{
      author={Hartung, T.},
      author={Jansen, K.},
      title={Zeta-regularized vacuum expectation values},
      journal={Journal of Mathematical Physics},
      volume={60},
      pages={093504},
      date={2019}
    }

    \bib{hartung-jansen-gauge-fields}{article}{
      author={Hartung, T.},
      author={Jansen, K.},
      title={Integrating Gauge Fields in the $\zeta$-formulation of Feynman’s path integral},
      journal={In: Boggiatto, P., Cappiello, M., Cordero, E., Coriasco, S., Garello, G., Oliaro, A., Seiler, J. (eds.) Advances in Microlocal and Time-Frequency Analysis. Applied and Numerical Harmonic Analysis, Birkhäuser},
      pages={241-258},
      date={2020}
    }

    \bib{hawking}{article}{
      author={Hawking, S. W.},
      title={Zeta Function Regularization of Path Integrals in Curved Spacetime},
      journal={Communications in Mathematical Physics},
      volume={55},
      pages={133-148},
      date={1977}
    }
    
    \bib{hoermander-books}{book}{
      author={H\"{o}rmander, L.},
      title={The Analysis of Linear Partial Differential Operators},
      part={I-IV},
      publisher={Springer},
      address={Berlin/Heidelberg},
      date={1990}
    }

    \bib{jansen-hartung}{article}{
      author={Jansen, K.},
      author={Hartung, T.},
      title={Zeta-regularized vacuum expectation values from quantum computing simulations},
      journal={PoS LATTICE2019},
      pages={363},
      date={2020}
    }

    \bib{Juettner:2016atf}{article}{
      Author = {Juettner, A.},
      Booktitle = {{Proceedings, 33rd International Symposium on Lattice Field Theory (Lattice 2015): Kobe, Japan, July 14-18, 2015}},
      Date-Added = {2019-04-08 14:17:59 +0200},
      Date-Modified = {2019-04-11 17:24:07 +0200},
      Doi = {10.22323/1.251.0006},
      Journal = {PoS},
      Pages = {006},
      Title = {{Review of light flavour physics on the lattice}},
      Volume = {LATTICE2015},
      Year = {2016},
      Bdsk-Url-1 = {https://doi.org/10.22323/1.251.0006}
    }

    \bib{kronfeld2012twenty}{article}{
      Author = {Kronfeld, A. S},
      Date-Added = {2019-10-24 17:48:25 +0200},
      Date-Modified = {2019-10-24 17:48:25 +0200},
      Journal = {Annual Review of Nuclear and Particle Science},
      Pages = {265--284},
      Publisher = {Annual Reviews},
      Title = {Twenty-First Century Lattice Gauge Theory: Results from the Quantum Chromodynamics Lagrangian},
      Volume = {62},
      Year = {2012}
    }

    \bib{kontsevich-vishik}{article}{
      author={Kontsevich, M.},
      author={Vishik, S.},
     title={Determinants of elliptic pseudo-differential operators},
      journal={Max Planck Preprint, arXiv:hep-th/9404046},
      date={1994}
    }

    \bib{kontsevich-vishik-geometry}{article}{
      author={Kontsevich, M.},
      author={Vishik, S.},
      title={Geometry of determinants of elliptic operators},
      journal={Functional Analysis on the Eve of the XXI century, Vol. I, Progress in Mathematics},
      volume={131},
      pages={173-197},
      date={1994}
    }

    \bib{iso-murayama}{article}{
      author={Iso, S.},
      author={Murayama, H.},
      title={Hamiltonian Formulation of the Schwinger Model},
      journal={Progr. Theor. Phys.},
      volume={84},
      pages={142-163},
      date={1990}
    }

    \bib{lesch}{article}{
      author={Lesch, M.},
      title={On the Noncommutative Residue for Pseudodifferential Operators with $\log$-Polyhomogeneous Symbols},
      journal={Annals of Global Analysis and Geometry},
      volume={17},
      pages={151-187},
      date={1999}
    }
    
    \bib{marcolli-connes}{article}{
      author={Marcolli, M.},
      author={Connes, A.},
      title={From physics to number theory via noncommutative geometry. Part II: Renormalization, the Riemann-Hilbert correspondence, and motivic Galois theory},
      journal={In: P. E. Cartier, B. Julia, P. Moussa, P. Vanhove (eds) Frontiers in Number Theory, Physics, and Geometry: On Random Matrices, Zeta Functions, and Dynamical Systems, Springer},
      date={2006}
    }
    
    \bib{mckeon-sherry}{article}{
      author={McKeon, D. G. C.},
      author={Sherry, T. N.},
      title={Operator regularization and one-loop Green's functions},
      journal={Phys. Rev. D},
      volume={35},
      pages={3854},
      date={1987}
    }

    \bib{Meyer:2018til}{article}{
      Archiveprefix = {arXiv},
      Author = {Meyer, H. B.},
      Author = {Wittig, H.},
      Date-Added = {2019-04-08 14:17:59 +0200},
      Date-Modified = {2019-04-08 14:17:59 +0200},
      Doi = {10.1016/j.ppnp.2018.09.001},
      Eprint = {1807.09370},
      Journal = {Prog. Part. Nucl. Phys.},
      Pages = {46-96},
      Primaryclass = {hep-lat},
      Reportnumber = {MITP/18-069, HIM-2018-03},
      Title = {{Lattice QCD and the anomalous magnetic moment of the muon}},
      Volume = {104},
      Year = {2019},
      Bdsk-Url-1 = {https://doi.org/10.1016/j.ppnp.2018.09.001}
    }
    
    \bib{moretti97}{article}{
      author={Moretti, V.},
      title={Direct $\zeta$-function approach and renormalization of one-loop stress tensor in curved spacetimes},
      journal={Phys. Rev. D},
      volume={56},
      pages={7797},
      date={1997}
    }

    \bib{moretti99}{article}{
      author={Moretti, V.},
      title={One-loop stress-tensor renormalization in curved background: the relation between $\zeta$-function and point-splitting approaches, and an improved point-splitting procedure},
      journal={J. Math. Phys.},
      volume={40},
      pages={3843},
      date={1999}
    }
    
    \bib{moretti00}{article}{
      author={Moretti, V.},
      title={A review on recent results of the $\zeta$-function regularization procedure in curved spacetime},
      journal={In: D. Fortunato, M. Francaviglia, A. Masiello (eds) Recent developments in General Relativity, Springer},
      date={2000}
    }

    \bib{moretti11}{article}{
      author={Moretti, V.},
      title={Local $\zeta$-functions, stress-energy tensor, field fluctuations, and all that, in curved static spacetime},
      journal={Springer Proc. Phys.},
      volume={137},
      pages={323-332},
      date={2011}
    }

    \bib{paycha}{article}{
      author={Paycha, S.},
      title={Zeta-regularized traces versus the Wodzicki residue as tools in quantum field theory and infinite dimensional geometry},
      journal={Preceedings of the International Conference on Stochastic Analysis and Applications},
      pages={69-84},
      date={2004}
    }

    \bib{paycha-scott}{article}{
      author={Paycha, S.},
      author={Scott, S. G.},
      title={A Laurent expansion for regularized integrals of holomorphic symbols},
      journal={Geometric and Functional Analysis},
      volume={17 (2)},
      pages={491-536},
      date={2007}
    }
    
    \bib{radzikowski92}{book}{
      author={Radzikowski, M. J.},
      title={The Hadamard condition and Kay's conjecture in (axiomatic) quantum field theory on curved space-time},
      publisher={Princeton University: Ph.D. thesis},
      date={1992}
    }
    
    \bib{radzikowski96}{article}{
      author={Radzikowski, M. J.},
      title={Micro-local approach to the Hadamard condition in quantum field theory on curved space-time},
      journal={Communications in Mathematical Physics},
      volume={179},
      pages={529-553},
      date={1996}
    }
    
    \bib{ray}{article}{
      author={Ray, D. B.},
      title={Reidemeister torsion and the Laplacian on lens spaces},
      journal={Advances in Mathematics},
      volume={4},
      pages={109-126},
      date={1970}
    }
    
    \bib{ray-singer}{article}{
      author={Ray, D. B.},
      author={Singer, I. M.},
      title={$R$-torsion and the Laplacian on Riemannian manifolds},
      journal={Advances in Mathematics},
      volume={7},
      pages={145-210},
      date={1971}
    }

    \bib{robles}{article}{
      author={Robles, N. M.},
      title={Zeta Function Regularization},
      journal={Ph.D. thesis, Imperial College London},
      date={2009}
    }

    \bib{Rothe:1992nt}{book}{
      author         = {Rothe, H. J.},
      title          = {Lattice gauge theories: An Introduction},
      journal        = {World Sci. Lect. Notes Phys.},
      volume         = {43},
      year           = {1992},
      pages          = {1-381},
      note           = {World Sci. Lect. Notes Phys.82,1(2012)]},
    }

    \bib{seeley}{article}{
      author={Seeley, R. T.},
      title={Complex Powers of an Elliptic Operator},
      journal={Proceedings of Symposia in Pure Mathematics, American Mathematical Society},
      volume={10},
      pages={288-307},
      date={1967}
    }

    \bib{shiekh}{article}{
      author={Shiekh, A. Y.},
      title={Zeta Function Regularization of Quantum Field Theory},
      journal={Can. J. Phys.},
      volume={68},
      pages={620-629},
      date={1990}
    }

    \bib{sobol}{article}{
      author={Sobol', I. M.},
      title={On the distribution of points in a cube and the approximate evaluation of integrals},
      journal={U.S.S.R. Comput. Math. and Math. Phys.},
      volume={7},
      issue={4},
      pages={86-112},
      date={1967}
    }

    \bib{tong-strings}{article}{
      author={Tong, D.},
      title={String Theory},
      journal={University of Cambridge Part III Mathematical Tripos, lecture notes, 2009, http://www.damtp.cam.ac.uk/user/tong/string/string.pdf}
    }
    
    \bib{wodzicki}{book}{
      author={Wodzicki, M.},
      title={Noncommutative residue. I. Fundamentals. K-theory, arithemtic and geometry (Moscow, 1984-1986), 320-399, Lecture Notes in Math., 1289},
      publisher={Springer},
      address={Berlin},
      date={1987}
    }
  \end{biblist}
\end{bibdiv}
\end{document}